\documentclass[journal]{IEEEtran}

\usepackage{amsmath}
\usepackage{amsfonts}
\usepackage{amssymb}
\usepackage{amsthm}
\usepackage{tabularx}
\usepackage{mathrsfs} 

\usepackage{url}
\usepackage{hyperref}
\newtheorem{theorem}{Theorem}
\newtheorem{remark}{Remark}
\newtheorem{definition}{Definition}

\newtheorem{corollary}{Corollary}

\usepackage{stfloats}
\usepackage{float}
\usepackage{graphicx}
\hyphenation{op-tical net-works semi-conduc-tor}
\usepackage{cite}
\usepackage{xcolor}
\usepackage{subfigure}
\renewcommand{\vec}[1]{\mathbf{#1}}

\makeatletter
\def\blfootnote{\xdef\@thefnmark{}\@footnotetext}
\makeatother
\begin{document}
	
	\title{\huge{Performance Analysis of SWIPT Relay Networks over Arbitrary Dependent Fading Channels}} 
	\author{Farshad~Rostami~Ghadi\IEEEmembership{}~and  F.~Javier~L\'opez-Mart\'inez\IEEEmembership{}}
	\maketitle
	\begin{abstract}
		In this paper, we investigate the impact of fading channel correlation on the performance of dual-hop decode-and-forward (DF) simultaneous wireless information and power transfer (SWIPT) relay networks. More specifically, by considering the power splitting-based relaying (PSR) protocol for the energy harvesting (EH) process, we quantify the effect of positive and negative dependency between the source-to-relay ($SR$) and relay-to-destination ($RD$) links on key performance metrics such as ergodic capacity and outage probability. To this end, we first represent general formulations for the cumulative distribution function (CDF) 
		of the product of two arbitrary random variables, exploiting copula theory. This is used to derive the closed-form expressions of the ergodic capacity and outage probability in a SWIPT relay network under correlated Nakagami-$m$ fading channels. Monte-Carlo simulation results are provided to validate the correctness of the developed analytical results, showing that the system performance significantly improves under positive dependence in the $SR$-$RD$ links, compared to the case of negative dependence and independent links. Results further demonstrate that the efficiency of the ergodic capacity and outage probability ameliorates as the fading severity reduces for the PSR protocol.  
	\end{abstract}
	\begin{IEEEkeywords}
		Relay network, SWIPT, ergodic capacity, outage probability, Nakagami-$m$ model, energy harvesting, fading correlation, copula theory
	\end{IEEEkeywords}
	\maketitle
	\blfootnote{\noindent Manuscript received June xx, 2022; revised XXX.} 
	
	
	\blfootnote{\noindent This work was funded in part by Junta de Andaluc\'ia, the European Union and the European Fund for Regional Development FEDER through grants P18-RT-3175 and EMERGIA20-00297, and in part by MCIN/AEI/10.13039/501100011033 through grant PID2020-118139RB-I00.}
	\blfootnote{\noindent The authors are with the Communications and Signal Processing Lab, Telecommunication Research Institute (TELMA), Universidad de M\'alaga, M\'alaga, 29010, (Spain). F.~J.~L\'opez-Mart\'inez is also with the Dept. Signal Theory, Networking and Communications, University of Granada, 18071, Granada (Spain). (E-mails: $\rm farshad@ic.uma.es, fjlm@ugr.es$).}
	
	\blfootnote{Digital Object Identifier 10.1109/XXX.2021.XXXXXXX}
	\vspace{-3mm}
	\section{Introduction}\label{introduction}
	Nowadays, the energy supply for electronic devices has become one of the most important challenges in designing future wireless communication systems, i.e., sixth-generation (6G) network \cite{david20186g}. For instance, in emerging technologies such as the Internet of Things (IoT) and its corresponding wireless applications like body wearables, peer-to-peer (P2P), device-to-device (D2D), and vehicle-to-vehicle (V2V) communications, the device nodes are mostly battery-dependent and power-constrained, and thus, they require intermittent battery replacement and recharging to maintain network connectivity, which is too expensive or even impossible in some cases. In this regard, energy harvesting (EH) from ambient energy sources has appeared as a promising approach to prolong the lifetime of energy-constrained wireless communication systems \cite{huang2012throughput,jiang2017optimal,xiong2017rate}, equipped with replacing or recharging batteries. In contrast to traditional EH technologies that mostly relied on natural energy sources and had limited ranges of applications due to environment uncertainty, recent EH technologies exploit radio frequency (RF) signals that provide reliable energy flows and guarantee the system performance. Indeed, since the RF signals are able to carry both energy and information, simultaneous wireless information and power transfer (SWIPT) has become an alternative approach to power the next generation of wireless networks. The main idea of SWIPT was first introduced in \cite{varshney2008transporting} from an information-theoretic viewpoint, where the authors proposed that nodes harvest energy from their received RF information-bearing signals. However, it is not feasible for receivers' architecture to decode signals and harvest energy at the same time due to practical limitations \cite{lu2014wireless}. Later, in order to address this issue, the authors in \cite{zhou2013wireless} proposed two practical receiver architectures with separated information decoding and energy harvesting receiver for SWIPT, namely the power splitting (PS) and the time switching (TS) architectures. In the TS protocol, the receiver switches over time between EH and information processing, whereas in the PS scheme the receiver uses a portion of the received power for EH and the remaining for information processing.
	\subsection{Related Works}
	In recent years, intense research activities have been carried out related to investigate the role of SWIPT in various wireless communication systems, especially cooperative relaying networks \cite{nasir2013relaying,pan2017outage,do2017exploiting,
		di2016simultaneous,rabie2017half,li2015outage,lee2016outage,lou2017performance,zhong2018outage,rabie2018full,nauryzbayev2018performance,makarfi2020performance,mohjazi2018performance}. 	
In \cite{nasir2013relaying}, the authors considered an amplify-and-forward (AF) relay network with Rayleigh fading channels and analyzed the key performance metrics such as ergodic capacity and outage probability under both PS and TS protocols to determine the proposed system throughput, where it was showed that the SWIPT-based relaying provides throughput improvement, communication reliability enhancement, and coverage range extension. In contrast, the authors in \cite{pan2017outage} derived the closed-form expression of the outage probability over independent Rayleigh SWIPT-relaying networks, where both AF and decode-and-forward (DF) protocols were considered. A more general SWIPT-relaying network, i.e., multiuser multirelay cooperative network, over Rayleigh fading channels was considered in \cite{do2017exploiting}, where the authors investigated the outage probability performance under DF, variable-gain AF, and fixed-gain AF protocols. Proposing two information receiving strategies, i.e., the mutual information accumulation (IA) and the energy accumulation (EA), the authors in \cite{di2016simultaneous} evaluated the achievable rate region of the SWIPT relaying network under Rayleigh fading channels. Considering log-normal fading channels in a dual-hop SWIPT relaying network, the ergodic outage probability performance for both full-duplex (FD) and half-duplex (HD) relaying mechanisms with DF and AF relaying protocols under PS and TS schemes was investigated in \cite{rabie2017half}. The outage probability performance for SWIPT relaying networks in the presence of direct link between the source and the destination under Rayleigh fading channels was analyzed in \cite{li2015outage} and  \cite{lee2016outage}. Furthermore, assuming the direct link between the source and the destination, the performance of SWIPT relaying networks in terms of the outage probability and bit error rate under Nakagami-$m$ fading channels was investigated in \cite{lou2017performance} and \cite{zhong2018outage}, respectively. On the other hand, key performance metrics for SWIPT relaying networks under generalized $\kappa$-$\mu$, $\alpha$-$\mu$, and Fisher-Snedecor $\mathcal{F}$ composite fading channels were analyzed in \cite{rabie2018full}, \cite{nauryzbayev2018performance}, and \cite{makarfi2020performance}, respectively.
	\subsection{Motivation and Contribution}
	Recent research has shown that the performance of SWIPT-based relaying networks highly depends on the statistical characteristics of channels in radio propagation environments. Therefore, accurate modeling of fading channels in SWIPT-based relaying networks is a momentous issue that should be considered. However, in all the above-mentioned literature related to the performance analysis of SWIPT relaying networks, it was ignored the potential dependence structure of the source-to-relay ($SR$) hop on the relay-to-destination ($RD$) hop, while the channel coefficients observed by the relay and the destination may be correlated in practice \cite{fan2017secure,fan2017secrecy}. In addition, from a communication-theoretic perspective, the equivalent channel observed by the destination over a SWIPT-relaying network is the product of two correlated random variables (RVs), which largely complicates the performance evaluation of such a system. On the other hand, the underlying dependence between fading channel coefficients may not be linear, and thus, the classic Pearson correlation coefficient fails to appropriately model the interdependence of fading events caused by different mechanisms, especially as to the tails of the fading distributions \cite{livieratos2014correlated}. Hence, the role of general dependence structures beyond linear correlation is gaining momentum in the wireless community. In this regard, one flexible method for incorporating both positive/negative dependence structures between RVs and describing the non-linear dependency between arbitrary RVs is copula theory, which has been recently used in the performance analysis of wireless communication systems \cite{gholizadeh2015capacity,ghadi2020copula,ghadi2020copula1,jorswieck2020copula,besser2020copula,ghadi2021role,besser2020bounds,ghadi2022capacity,ghadi2022performance}. Copula functions are mostly defined with a specific dependence parameter which indicates the measure of dependency between correlated RVs. \textcolor{blue}{In this regard, the main advantages of using copula theory rather than traditional statistical methods are: $(i)$  Providing a more general case by describing both linear and non-linear structure of dependence between two or more arbitrary RVs; $(ii)$ Generating the multivariate distributions of two or more arbitrary random variables by only knowing the marginal distributions; $(iii)$ Describing both positive and negative dependencies between two or more arbitrary RVs. Exploiting such benefits, the authors in \cite{gholizadeh2015capacity} analyzed the capacity of correlated Nakagami-$m$ fading MIMO channels by using specific copula functions under linear and positive dependence structures. The outage probability and coverage region for correlated Rayleigh fading multiple access channels were also derived in \cite{ghadi2020copula} by using the Farlie-Gumbel-Morgenstern (FGM) copula function. The authors in \cite{ghadi2020copula1} also applied the FGM copula to secure communications and derived closed-form expressions for the secrecy outage probability and average secrecy capacity under correlated Rayleigh fading channels. In \cite{jorswieck2020copula} and \cite{besser2020copula}, the authors provided the upper and lower bounds of the outage probability for the multiple access communications under correlated Rayleigh fading distributions. Using the FGM copula function, the authors in \cite{ghadi2021role} analyzed the outage probability and coverage region of multiple access communications in the presence of the non-causally known side information at the transmitters. The upper and lower bounds for secrecy performance metrics were also derived in \cite{besser2020bounds} by using the Fr\'echet-Hoeffding (FH) bounds. In \cite{ghadi2022capacity}, the authors obtained a general formulation for the product fading channels for backscatter communication systems and derived the closed-form expression of the average capacity using the FGM copula. In addition, by exploiting Clayton copula, the outage probability and average capacity for different multiple access communications under correlated Fisher-Snedecor $\mathcal{F}$ fading channels were analyzed in \cite{ghadi2022performance}. Even so, most previous works in this context only considered a specific copula function \cite{gholizadeh2015capacity,ghadi2020copula,ghadi2020copula1}, \cite{ghadi2021role}, \cite{ghadi2022performance} or considered FH bounds to analyze their proposed system models \cite{jorswieck2020copula,besser2020copula}, \cite{besser2020bounds}, 
	 but fail to analyze the system performance under arbitrary fading correlation. References dedicated to investigate the role of correlation in the context of wireless powered communications are scarce, and mostly restricted to point-to-point links with linear dependence \cite{Tarrias2021}, or based on simulations \cite{santi2022}. However, depending on the specific set-up, correlation between links in relay-based scenarios is experimentally shown to appear in backscatter communications due to pinhole effect \cite{Griffin2007,Zhang2019}, or due to the slow variability of fading in SWIPT scenarios with bidirectional operation between the energy source and the energy-constrained device \cite{Deng2018,Tarrias2021}.} With all the aforementioned considerations, there are several unanswered practical questions over SWIPT-based relaying networks to date: $(i)$ What is the effect of fading correlation on the key performance metrics of SWIPT in cooperative relaying communications? $(ii)$ How does fading severity affect the performance of SWIPT in cooperative relaying communications?
	To the best of the authors' knowledge, there has been no previous research in analyzing SWIPT-relaying networks with arbitrarily distributed and correlated fading channels. Motivated by the aforesaid observations, we are concerned with the correlated fading issue of wireless energy harvesting and information processing over the DF cooperative relaying communications. To this end, we consider the scenario that the energy-constrained relay node harvests energy from the RF signal broadcasted by a source node and uses that harvested energy to forward the source signal to a destination node, where the $SR$ and $RD$ links are correlated RVs with arbitrary distributions. Based on the DF relaying protocol, we adopt the PS-based relaying (PSR) scheme, as proposed in \cite{nasir2013relaying}, for separate information processing and energy harvesting at the energy-constrained relay node. \textcolor{blue}{Using a mathematical approach based on that adopted in \cite{ghadi2022capacity}}, we introduce a general formulation for the cumulative distribution function (CDF) 
	of two correlated RVs with any arbitrary distribution, exploiting the copula theory. Then, in order to analyze the system performance, we derive the closed-form expression of the ergodic capacity and outage probability under Nakagami-$m$ fading channels, using a specific copula function. Specifically, the main
	contributions of our work are summarized as follows:
	
	\textbullet\, We provide general formulations for the CDF 
	of the equivalent channel observed by the destination (i.e, the product of two arbitrarily distributed and correlated RVs). 
	 
	\textbullet \, In order to realize the impact of the fading correlation on the system performance, we derive the closed-form expressions of the ergodic capacity and outage probability assuming the PSR protocol under correlated Nakagami-$m$ fading, exploiting the Farlie-Gumbel-Morgenstern (FGM) copula.
	
	\textbullet\, By changing the
	dependence parameter within the defined range, our numerical and simulation results show that the system performance improves in terms of the ergodic capacity and the outage probability under the positive dependence structure, while the negative correlation has destructive effects on the system efficiency. In addition, a reduction in fading severity improves the system performance under the PSR scheme. 
	\subsection{Paper Organization}
	The rest of this paper is organized as follows. Section \ref{sec-model} describes the system model considered in our work. In section \ref{sec-snr}, the main concept of copula theory is reviewed, and then the signal-to-noise ratio (SNR) distribution is derived. Section \ref{sec-metrics} presents the main results of the considered SWIPT-based relaying network under correlated Nakagami-$m$ fading channels so that the closed-form expressions of the ergodic capacity and outage probability are determined in subsections \ref{subsec-capacity} and \ref{subsec-out}, respectively.  In section \ref{sec-results}, the efficiency of analytical results is illustrated numerically, and finally, the conclusions are drawn in section \ref{sec-conclusion}.
	\section{System Model}\label{sec-model}
	\subsection{Channel Model}
	We consider a relay network as shown in Fig. \ref{fig-model}, where a source node $S$ wants to send information to a destination node $D$ through an intermediate relay node $R$. It is assumed that there is no direct link between the source $S$ and the destination $D$ due to deep shadow fading or surrounding physical obstacles. 
	Such an assumption is widely adopted in research studies related to SWIPT relay communications \cite{nasir2013relaying,rabie2017half,mohjazi2018performance}. Specifically, this presumption is related to the coverage extension models where there is a long distance between the source and destination, and relays are exploited in order to maintain connectivity. This model is used in Internet-of-Thing (IoT) deployments, where RF-powered relays are employed to provide coverage expansion to avoid interference. \textcolor{blue}{As will become evident in the sequel, because of the time-division fashion on which the half-duplex relay protocol is implemented, interference is avoided at the information receiver.} For simplicity purposes, we assume that all nodes are equipped with single antennas. We also suppose that the nodes $S$ and $D$ have sufficient energy supply from other sources (e.g., a battery or a power grid), while the relay $R$ has no external power supply and only relies on the harvested signal transmitted by source $S$, thus relay $R$ is energy-constrained. The transmission consists of two phases and the HD deployment based on the DF strategy is adopted for the relay node $R$. The channel coefficients of $SR$ and $RD$ are defined by $h_{\mathrm{SR}}$ and $h_{\mathrm{RD}}$, respectively, and they are considered arbitrarily distributed and correlated RVs. Besides, we assume all channels are quasi-static fading channels, that is, the fading coefficients are fixed during the transmission of an entire codeword (i.e., $h_{\mathrm{SR}}(i)=h_{\mathrm{SR}}$ and $h_{\mathrm{RD}}(i)= h_{\mathrm{RD}}$, $\forall i= 1, ..., n$), and vary randomly from one block to another block.
\begin{figure*}
	\centering
	\hspace{-0.1cm}\subfigure[]{%
		\includegraphics[width=0.3\textwidth]{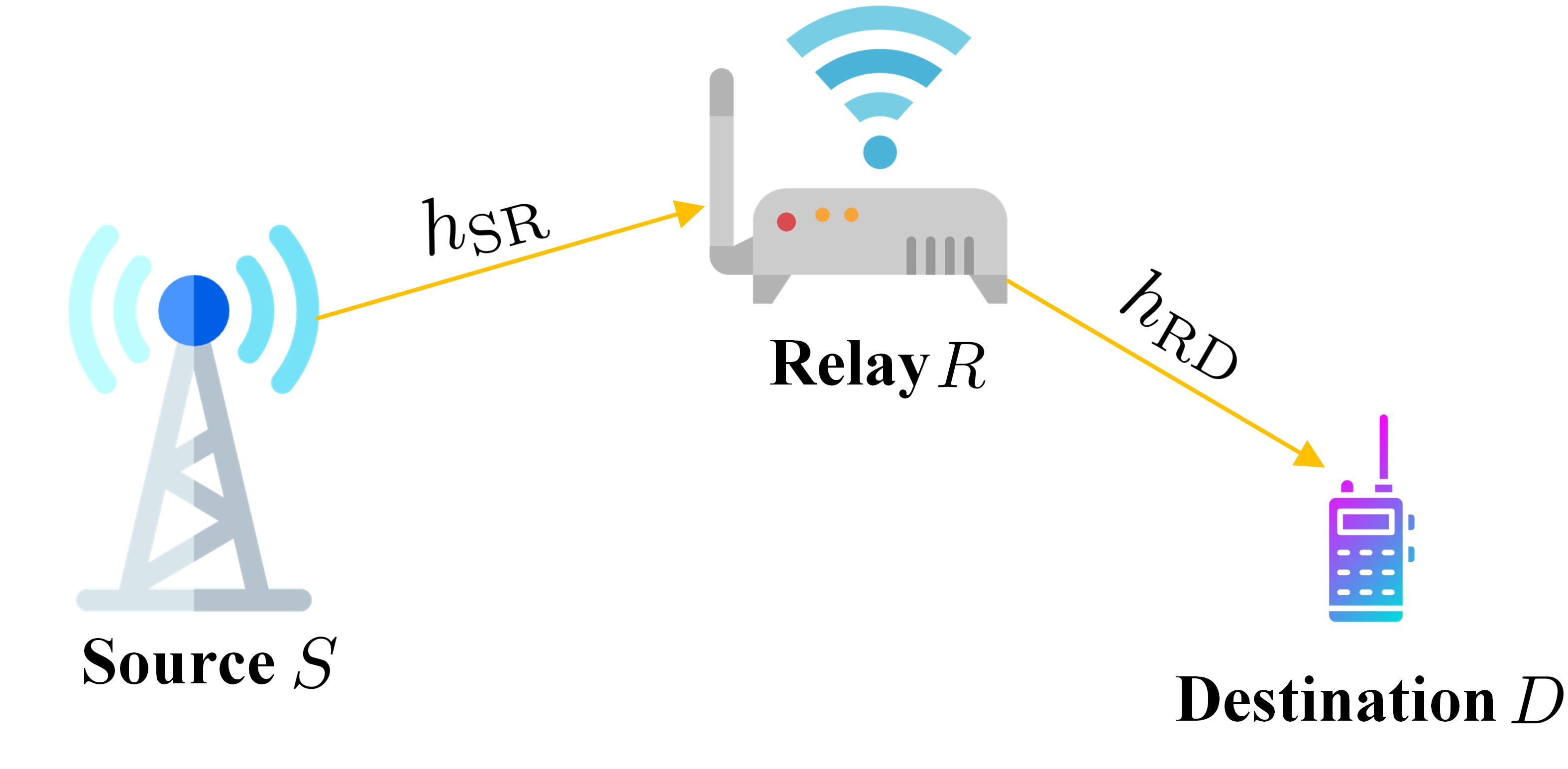}%
		\label{fig-model}%
	}\hspace{0.1cm}
	\subfigure[]{%
		\includegraphics[width=0.28\textwidth]{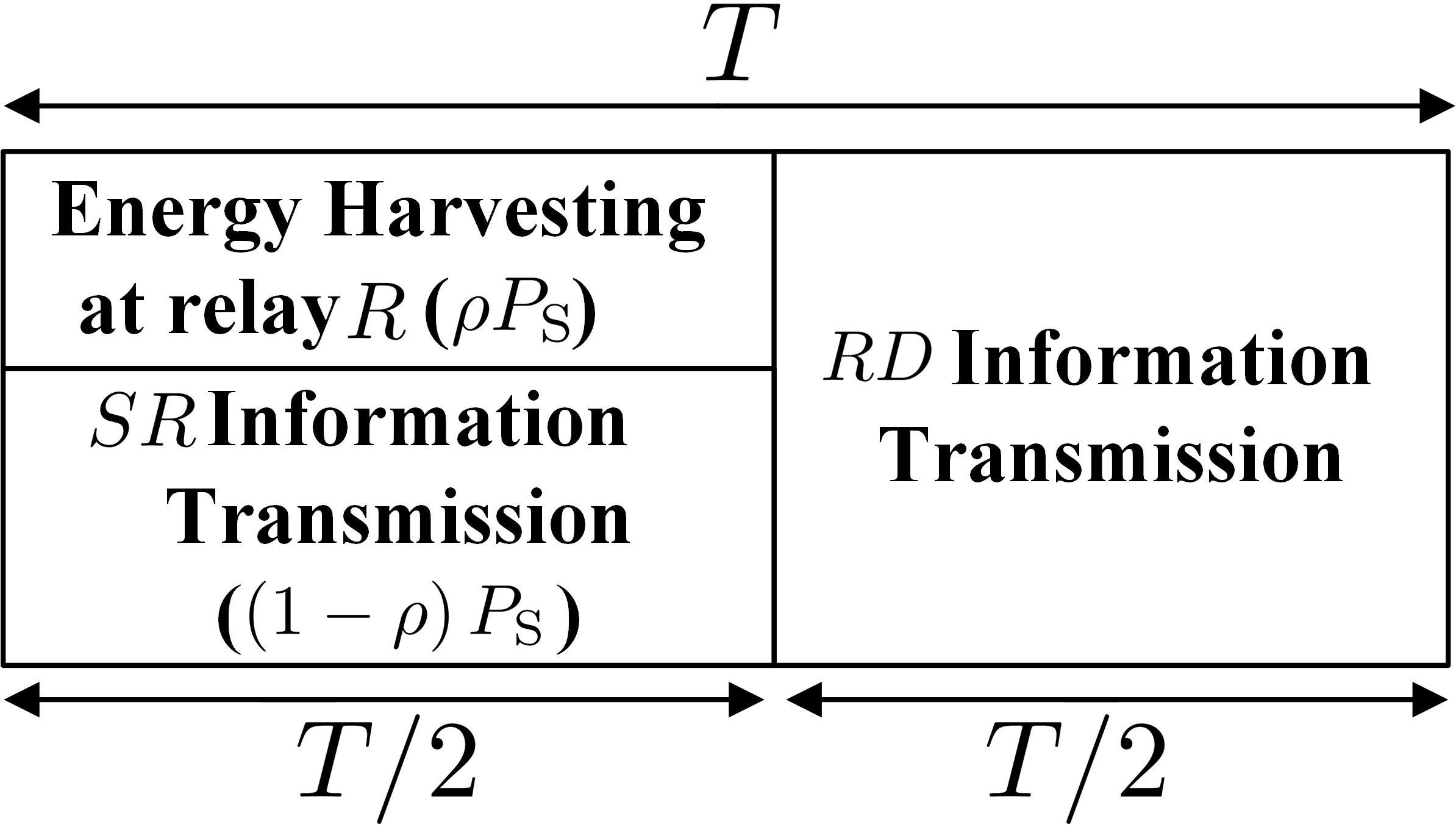}%
		\label{fig-psr1}%
	}\hspace{0.1cm}
	\subfigure[]{%
		\includegraphics[width=0.4\textwidth]{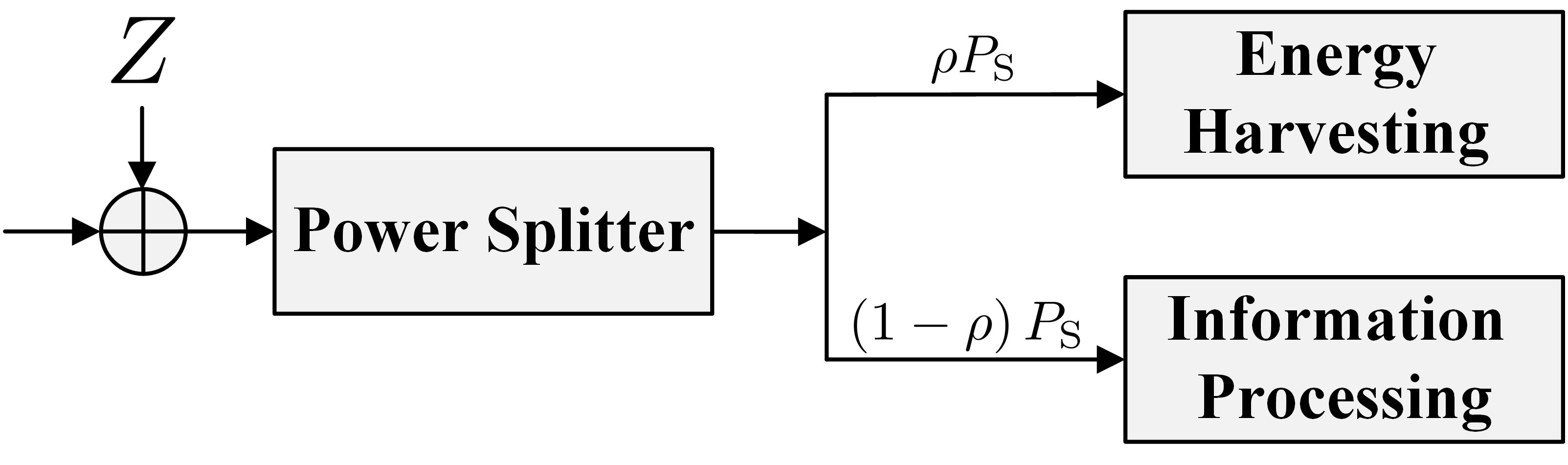}%
		\label{fig-psr2}%
	}\hspace{-0.1cm}
	\caption{(a) Illustration of a SWIPT-based relaying network. (b) Illustration of the PSR protocol for energy harvesting and information processing at the relay $R$. (c)  Block diagram of the relay $R$ under the PSR protocol, where $Z$  denotes noise with power $N$.}\vspace{0cm}
	\label{fig:ab}
\end{figure*}
	\subsection{Information and Energy Transfer}
	We exploit the PSR protocol for transferring information and energy in the considered model. In this protocol, the total communication process time $T$ is split into two consecutive time slots. Let $0<\rho\leq 1$ and $P_\mathrm{S}$ denote the power-splitting factor and source power, respectively. In the first slot, the relay node $R$ uses a portion of the source power $(1-\rho)P_\mathrm{S}$ for information processing (i.e., decoding), and the remaining source power $\rho P_\mathrm{S}$ is harvested, so it can be used in the second time slot for relaying the decoded information. (see Figs. \ref{fig-psr1} and \ref{fig-psr2}). Thus, the instantaneous SNR at the relay $R$ and the destination $D$ can be defined as:
	\begin{align}
		\gamma_\mathrm{R}=\frac{(1-\rho)P_\mathrm{S}|h_\mathrm{SR}|^2}{d^{\alpha}_\mathrm{SR}N}=\hat{\gamma}_\mathrm{R}g_\mathrm{SR},
	\end{align}
	\begin{align}
		\gamma_\mathrm{D}=\frac{\kappa\rho P_\mathrm{S}|h_\mathrm{SR}|^2|h_\mathrm{RD}|^2}{d^{\alpha}_{\mathrm{SR}}d^{\alpha}_{\mathrm{RD}}N}=\hat{\gamma}_{\mathrm{D}}g_\mathrm{SR}g_\mathrm{RD},
	\end{align}
	where $d_\mathrm{SR}$ and $d_\mathrm{RD}$ are the distances of $SR$ and $RD$, respectively, and $\alpha$ is the path-loss exponent. The terms $g_\mathrm{SR}=|h_\mathrm{SR}|^2$ and \textcolor{blue}{$g_\mathrm{RD}=|h_\mathrm{RD}|^2$} define the {\textcolor{blue}{normalized}} fading power channel coefficients associated to the $SR$ and $RD$ links, respectively. Besides, $0<\kappa\leq 1$ is the EH efficiency and $N$ denotes the noise power. 
    {\textcolor{blue}{The average SNR at the destination is given by $\mathbb{E}\{\gamma_\mathrm{D}\}=\hat{\gamma}_{\mathrm{D}}\mathbb{E}\{g_\mathrm{SR}g_\mathrm{RD}\}$. Note that $\mathbb{E}\{g_\mathrm{SR}g_\mathrm{RD}\}=1$ only in the case of independence; however, in the cases with positive/negative dependence, this factor will depend on the actual relation between $g_\mathrm{SR}$ and $g_\mathrm{RD}$.}
	\section{SNR distribution}\label{sec-snr}
	In this section, we derive general analytical expressions for the CDF 
	of SNR $\gamma_\mathrm{D}$ by exploiting the copula theory. 
	\subsection{Copula definition and properties}
	In order to determine the distribution of $\gamma_\mathrm{D}$ in the general case, we first briefly review some fundamental definitions and properties of the two-dimensional copulas \cite{nelsen2007introduction}.
	
	\begin{definition}[Two-dimensional copula] The copula function $C(u_1,u_2)$ of a random vector $\vec{X}\left(X_1,X_2\right)$ defined on the unit hypercube $[0,1]^2$ with uniformly distributed RVs $U_j:=F_{X_j}(x_j)$ for $j\in\{1,2\}$ over $[0,1]$ is given by
		\begin{align}
			C(u_1,u_2)=\Pr(U_1\leq u_1,U_2\leq u_2).
		\end{align}
where $F_{X_j}(x_j)=\Pr(X_j\leq x_j)$ denotes the marginal CDF.
%
	\end{definition}
	\begin{theorem}[Sklar's theorem]\label{thm-sklar}
		Let $F_{X_1,X_2}(x_1,x_2)$ be a joint CDF of RVs with marginals $F_{X_j}(x_j)$ for $j\in\{1,2\}$. Then, there exists one copula function $C(\cdot,\cdot)$ such that for all $x_j$ in the extended real line domain $\mathbb{R}$,%
		\begin{align}\label{sklar}
			F_{X_1,X_2}(x_1,x_2)=C\left(F_{X_1}(x_1),F_{X_2}(x_2)\right).
		\end{align}
	\end{theorem}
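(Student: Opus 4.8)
The plan is to construct the copula explicitly from the joint law by means of quantile transformations, treating first the case most relevant to this paper—continuous marginals, as arise for Nakagami-$m$ fading—and then indicating the modification needed in full generality. For a univariate CDF $F$ write $F^{-1}(t)=\inf\{x\in\mathbb{R}:F(x)\geq t\}$ for its generalized inverse (quantile function), which is non-decreasing and left-continuous. I would \emph{define} the candidate
\begin{align}
C(u_1,u_2):=F_{X_1,X_2}\!\left(F_{X_1}^{-1}(u_1),F_{X_2}^{-1}(u_2)\right),\qquad (u_1,u_2)\in[0,1]^2,
\end{align}
and then prove that $C$ is a copula satisfying \eqref{sklar}.

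First I would check the defining properties of a copula. Groundedness, $C(u_1,0)=C(0,u_2)=0$, follows from $F_{X_j}^{-1}(0)=-\infty$ together with $\lim_{x\to-\infty}F_{X_1,X_2}(x,\cdot)=0$. The uniform-marginal conditions, e.g.\ $C(u_1,1)=u_1$, follow by noting that $F_{X_2}^{-1}(1)$ is a point beyond which $X_2$ places no mass, so $C(u_1,1)=F_{X_1}\big(F_{X_1}^{-1}(u_1)\big)=u_1$ by continuity of $F_{X_1}$ (i.e.\ $F_{X_1}\circ F_{X_1}^{-1}$ is the identity on $(0,1)$). The $2$-increasing property—nonnegativity of the $C$-volume of every rectangle $[u_1,v_1]\times[u_2,v_2]$—is inherited from the $2$-increasing property of the genuine joint CDF $F_{X_1,X_2}$, because the non-decreasing maps $F_{X_1}^{-1},F_{X_2}^{-1}$ send rectangles to rectangles in the extended reals.

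Next I would establish the identity \eqref{sklar}. The only subtlety is that $F_{X_j}^{-1}(F_{X_j}(x_j))$ may be strictly smaller than $x_j$ when $F_{X_j}$ is flat there; but it always satisfies $F_{X_j}^{-1}(F_{X_j}(x_j))\leq x_j$, and the interval $\big(F_{X_j}^{-1}(F_{X_j}(x_j)),x_j\big]$ carries zero $X_j$-probability since $F_{X_j}$ is constant on it. Hence the event on which the two arguments differ is null and
\begin{align}
C\!\left(F_{X_1}(x_1),F_{X_2}(x_2)\right)=F_{X_1,X_2}\!\left(F_{X_1}^{-1}(F_{X_1}(x_1)),F_{X_2}^{-1}(F_{X_2}(x_2))\right)=F_{X_1,X_2}(x_1,x_2).
\end{align}
When both marginals are continuous one additionally gets uniqueness: $F_{X_1}(\mathbb{R})$ and $F_{X_2}(\mathbb{R})$ are then dense in $[0,1]$, so \eqref{sklar} determines $C$ on a dense subset of $[0,1]^2$ and continuity of $C$ finishes the argument.

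For possibly discontinuous marginals the clean route is R\"uschendorf's distributional transform: take $V\sim\mathrm{Uniform}[0,1]$ independent of $(X_1,X_2)$ and set $\tilde U_j=F_{X_j}(X_j-)+V\big(F_{X_j}(X_j)-F_{X_j}(X_j-)\big)$; one verifies $\tilde U_j\sim\mathrm{Uniform}[0,1]$ and $F_{X_j}^{-1}(\tilde U_j)=X_j$ almost surely, whence the joint CDF of $(\tilde U_1,\tilde U_2)$ plays the role of $C$. An equivalent, more combinatorial alternative is to define $C$ first on the product of the ranges of the marginals and then invoke the subcopula-extension lemma in \cite{nelsen2007introduction}. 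I expect this last step—reconciling the construction at the atoms of the marginals—to be the only delicate point; for all the continuous fading models treated in this paper it is vacuous and the explicit formula above applies directly.
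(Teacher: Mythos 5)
Your construction is sound, but note that the paper does not prove this statement at all: Theorem~\ref{thm-sklar} is Sklar's classical theorem, recalled verbatim from the copula literature \cite{nelsen2007introduction} as background for the later derivations, so there is no in-paper argument to compare against. What you give is essentially the standard textbook proof: for continuous marginals the explicit candidate $C(u_1,u_2)=F_{X_1,X_2}\bigl(F_{X_1}^{-1}(u_1),F_{X_2}^{-1}(u_2)\bigr)$ is verified to be grounded, to have uniform margins (this is where continuity of $F_{X_j}$ enters, via $F_{X_j}\circ F_{X_j}^{-1}=\mathrm{id}$ on $(0,1)$), and to be $2$-increasing by pulling back rectangles through the nondecreasing quantile maps; the identity \eqref{sklar} then follows because the interval $\bigl(F_{X_j}^{-1}(F_{X_j}(x_j)),x_j\bigr]$ is $X_j$-null, and uniqueness on $[0,1]^2$ follows from density of the ranges of continuous marginals. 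Your handling of the general case via R\"uschendorf's distributional transform (or the subcopula-extension lemma) is the right patch, since the explicit quantile formula fails to have uniform margins at atoms; correctly, you observe this case is vacuous for the Nakagami-$m$ (Gamma) marginals actually used in the paper. Two small remarks: the paper's statement only asserts existence, so your uniqueness addendum is extra (and indeed only valid for continuous marginals), and in the verification of $C(u_1,1)=u_1$ one should spell out the convention $F_{X_2}^{-1}(1)=+\infty$ when the support is unbounded, interpreting $F_{X_1,X_2}(\cdot,+\infty)$ as a limit; neither point affects correctness.
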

	%
	%
	\begin{definition}[Survival copula]\label{def-surv}
		Let $\vec{X}=(X_1,X_2)$ be a vector of two absolutely continuous RVs with joint CDF $F_{X_1,X_2}(x_1,x_2)$ and marginal survival functions $\overline{F}_{X_j}(x_j)=\Pr(X_j>x_j)=1-F_{X_j}(x_j)$ for $j=1,2$, the joint survival function $\overline{F}_{X_1,X_2}(x_1,x_2)$ is given by\vspace{-1ex}
		\begin{align}
			\overline{F}_{X_1,X_2}(x_1,x_2)&=\Pr(X_1>x_1,X_2>x_2)\\
			&=\hat{C}(\overline{F}_{X_1}(x_1),\overline{F}_{X_2}(x_2)),
		\end{align}
		where $\hat{C}(u_1,u_2)=u_1+u_2-1+C(1-u_1,1-u_2)$ is the survival copula of $\vec{X}$.
	\end{definition}
	%
	\begin{definition}[Dependence structures]
		Consider two copula functions that verify: 
		\begin{align}
			C_1 \prec C^{\perp} \prec C_2,
		\end{align}
		where $C^{\perp}(u_1,u_2)=u_1 \cdot u_2$ is the product copula and describes the independent structure. Then, $C_1$ and $C_2$ model the negative and positive dependence structures, respectively. 
	\end{definition}
	\subsection{Arbitrary dependence}
	Since the considered fading channels are correlated, the distribution of the SNR at the destination $D$ is that of the product of two arbitrary correlated RVs. For this purpose, we exploit the following theorems to determine 
	the CDF of the SNR $\gamma_\mathrm{D}$.
	\begin{theorem}\label{thm-product-cdf-pdf}
		Let $\vec{X}=(X_1,X_2)$ be a vector of two absolutely continuous RVs with the corresponding copula $C$ and CDFs $F_{X_j}(x_j)$ for $j\in\{1,2\}$. Thus, the 
		CDF of $Y=X_1X_2$ is:
	\begin{align}\nonumber
		&F_Y(y)=F_{X_1}(0)\\
		&+\int_{0}^{1}\mathrm{sgn}\left(F_{X_1}^{-1}(u)\right)\frac{\partial}{\partial u}C\left(u,F_{X_2}\left(\frac{y}{F_{X_1}^{-1}\left(u\right)}\right)\right)du,
	\end{align}
		where $F^{-1}_{X_1}(.)$ is an inverse function of $F_{X_1}(.)$ and  $\mathrm{sgn}(.)$ defines the Sign function.
	\end{theorem}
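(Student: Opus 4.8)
The plan is to compute $F_Y(y) = \Pr(X_1 X_2 \le y)$ by conditioning on the value of $X_1$ through the change of variables $u = F_{X_1}(x_1)$, so that integration against the law of $X_1$ becomes integration of a derivative of the copula over $u \in [0,1]$. First I would split the event $\{X_1 X_2 \le y\}$ according to the sign of $X_1$, since dividing the inequality $x_1 x_2 \le y$ by $x_1$ reverses it when $x_1 < 0$. Concretely, for $x_1 > 0$ the event becomes $\{X_2 \le y/x_1\}$, for $x_1 < 0$ it becomes $\{X_2 \ge y/x_1\}$, and the atom at $x_1 = 0$ contributes $\Pr(X_1 = 0, 0 \le y) = F_{X_1}(0) - \Pr(X_1 < 0)$ for $y \ge 0$ (in the intended application $X_1 \ge 0$ a.s., so this reduces to $F_{X_1}(0)$, which is how the statement is written). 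The Sign function $\mathrm{sgn}(F_{X_1}^{-1}(u))$ is precisely the bookkeeping device that merges the $x_1>0$ and $x_1<0$ cases into one integral.

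Next I would express each conditional probability via the copula. Writing the joint law through Sklar's theorem (Theorem~\ref{thm-sklar}), $\Pr(X_1 \le x_1, X_2 \le x_2) = C(F_{X_1}(x_1), F_{X_2}(x_2))$, one gets that the ``density in the first argument'' of the joint CDF is $\tfrac{\partial}{\partial u} C(u, F_{X_2}(x_2))$ evaluated at $u = F_{X_1}(x_1)$; this is a standard fact, namely that $\tfrac{\partial}{\partial u}C(u,v)$ is the conditional CDF $\Pr(X_2 \le F_{X_2}^{-1}(v) \mid U_1 = u)$. Integrating this conditional CDF against $du$ over the region where $X_1 > 0$, with $x_2$ replaced by the threshold $y/F_{X_1}^{-1}(u)$, yields the contribution $\int \tfrac{\partial}{\partial u} C\bigl(u, F_{X_2}(y/F_{X_1}^{-1}(u))\bigr)\,du$ over $\{u : F_{X_1}^{-1}(u) > 0\}$. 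For the region $X_1 < 0$ one needs $\Pr(X_2 \ge y/x_1 \mid U_1 = u) = 1 - \tfrac{\partial}{\partial u}C(u, F_{X_2}(y/F_{X_1}^{-1}(u)))$; combining with the $X_1>0$ part and the $\mathrm{sgn}$ factor, the constant terms reassemble into $F_{X_1}(0)$ and the copula-derivative terms combine into the single displayed integral over all of $[0,1]$.

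The main obstacle I expect is handling the technical regularity issues cleanly: justifying that $F_{X_1}$ has a well-defined (generalized) inverse so that the substitution $u = F_{X_1}(x_1)$ is legitimate, dealing with possible flat spots or jumps of $F_{X_1}$, and confirming that $\tfrac{\partial}{\partial u}C(u,v)$ exists for a.e.\ $u$ and genuinely represents the regular conditional distribution of $X_2$ given $U_1 = u$ (this holds for absolutely continuous $\vec X$, which is the stated hypothesis, via the Lebesgue differentiation theorem applied to $v \mapsto C(u+\mathrm{d}u, v) - C(u,v)$). A secondary subtlety is the boundary bookkeeping at $x_1 = 0$ and the sign convention when $y < 0$; since the quantities of interest (products of fading powers) are nonnegative, I would either restrict to $y \ge 0$ and $X_j \ge 0$, or carry the general-$y$ case through carefully and note that the $F_{X_1}(0)$ term absorbs the atom. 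Once these measure-theoretic points are in place, the identity follows by assembling the three pieces (atom at $0$, $X_1>0$ region, $X_1<0$ region) and recognizing the $\mathrm{sgn}$-weighted sum.
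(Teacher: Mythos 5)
Your argument is correct, and it reaches the stated formula by a route that is related to but cleaner and more self-contained than the paper's. The paper does not argue directly on the event $\{X_1X_2\le y\}$: it starts from the PDF of the product $Y_1=X_1X_2$ quoted from \cite[Thm.~3]{ghadi2022capacity}, integrates it over $(-\infty,t]$, performs the change of variable $v=F_{X_2}\bigl(y_1/F_{X_1}^{-1}(u)\bigr)$ (whose Jacobian reverses orientation when $F_{X_1}^{-1}(u)<0$), splits the $u$-integral at $u=F_{X_1}(0)$, and then evaluates the inner integrals of the copula density $c(u,v)$ to produce the $\partial_u C$ terms, with the lower limit $1$ on the negative part generating the constant $F_{X_1}(0)$; the remaining details are deferred to \cite{ly2019determining}. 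You instead decompose the probability directly according to the sign of $X_1$ and use the standard identification of $\partial_u C(u,\cdot)$ with the conditional CDF of $X_2$ given $U_1=u$, so the split at $u=F_{X_1}(0)$, the inequality reversal for $X_1<0$, and the reassembly of the constants into $F_{X_1}(0)$ all appear at the level of events rather than after a change of variables; this buys a derivation that does not rely on the prior PDF theorem, at the cost of having to justify the conditional-CDF interpretation of $\partial_u C$ (which you correctly flag, and which is exactly the $\int_0^{w} c(u,v)\,dv=\partial_u C(u,w)$ step the paper uses implicitly). One small slip in your first paragraph: under the absolute-continuity hypothesis there is no atom at $x_1=0$, so the term $F_{X_1}(0)$ does not come from $\Pr(X_1=0)$ (and if $X_1\ge 0$ a.s.\ it is simply zero); its true origin is the constant $1$ in $1-\partial_u C$ on the $X_1<0$ region, which you do identify correctly in your second paragraph, so this does not affect the validity of the argument.
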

	\begin{proof}
		The details of proof are in Appendix \ref{app-thm-product-cdf-pdf}.
	\end{proof}
	\begin{corollary}\label{col-pdf}
		The 
		CDF of $\gamma_\mathrm{D}$  in the general dependence case of fading channels is given by
			\begin{align}\nonumber
		&F_{\gamma_{\mathrm{D}}}(\gamma_\mathrm{D})=F_{G_\mathrm{SR}}(0)+\int_{0}^{1}\mathrm{sgn}\left(g_\mathrm{SR}\right)f_{G_\mathrm{SR}}(g_\mathrm{SR})\\
		&\times\frac{\partial}{\partial F_{G_\mathrm{SR}}(g_\mathrm{SR})}C\left(F_{G_\mathrm{SR}}(g_\mathrm{SR}),F_{G_\mathrm{RD}}\left(\frac{\gamma_\mathrm{D}}{\hat{\gamma}_\mathrm{D}g_\mathrm{SR}}\right)\right)dg_\mathrm{SR}\label{def-cdf-d}.
	\end{align}
	\end{corollary}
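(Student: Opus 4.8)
The plan is to obtain Corollary \ref{col-pdf} as a direct specialization of Theorem \ref{thm-product-cdf-pdf}, combined with a trivial scaling argument and a change of the integration variable. First I would write the destination SNR as $\gamma_\mathrm{D}=\hat{\gamma}_\mathrm{D}Y$ with $Y=g_\mathrm{SR}g_\mathrm{RD}$, i.e., the product of the two normalized fading power coefficients. Since $\hat{\gamma}_\mathrm{D}>0$, the map $y\mapsto\hat{\gamma}_\mathrm{D}y$ is strictly increasing, so
\begin{align}
F_{\gamma_\mathrm{D}}(\gamma_\mathrm{D})=\Pr\!\left(Y\le\frac{\gamma_\mathrm{D}}{\hat{\gamma}_\mathrm{D}}\right)=F_Y\!\left(\frac{\gamma_\mathrm{D}}{\hat{\gamma}_\mathrm{D}}\right).
\end{align}

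Next I would invoke Theorem \ref{thm-product-cdf-pdf} with $X_1=g_\mathrm{SR}$, $X_2=g_\mathrm{RD}$, the corresponding copula $C$, and argument $y=\gamma_\mathrm{D}/\hat{\gamma}_\mathrm{D}$, which gives
\begin{align}\nonumber
F_{\gamma_\mathrm{D}}(\gamma_\mathrm{D})&=F_{G_\mathrm{SR}}(0)\\
&\quad+\int_{0}^{1}\mathrm{sgn}\!\left(F_{G_\mathrm{SR}}^{-1}(u)\right)\frac{\partial}{\partial u}C\!\left(u,F_{G_\mathrm{RD}}\!\left(\frac{\gamma_\mathrm{D}}{\hat{\gamma}_\mathrm{D}F_{G_\mathrm{SR}}^{-1}(u)}\right)\right)du.
\end{align}
The remaining step is the substitution $u=F_{G_\mathrm{SR}}(g_\mathrm{SR})$, equivalently $g_\mathrm{SR}=F_{G_\mathrm{SR}}^{-1}(u)$. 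Because $g_\mathrm{SR}$ is absolutely continuous, $du=f_{G_\mathrm{SR}}(g_\mathrm{SR})\,dg_\mathrm{SR}$; the limits $u=0$ and $u=1$ map to the lower and upper ends of the support of $g_\mathrm{SR}$; $\mathrm{sgn}\!\left(F_{G_\mathrm{SR}}^{-1}(u)\right)=\mathrm{sgn}(g_\mathrm{SR})$; and $\gamma_\mathrm{D}/\!\left(\hat{\gamma}_\mathrm{D}F_{G_\mathrm{SR}}^{-1}(u)\right)=\gamma_\mathrm{D}/\!\left(\hat{\gamma}_\mathrm{D}g_\mathrm{SR}\right)$. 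Moreover the partial derivative with respect to the first copula slot, $\partial/\partial u$, is simply renamed to $\partial/\partial F_{G_\mathrm{SR}}(g_\mathrm{SR})$ after the change of variable. Collecting these substitutions reproduces \eqref{def-cdf-d}.

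I do not anticipate any real obstacle: the mathematical content is entirely contained in Theorem \ref{thm-product-cdf-pdf}, and the corollary is essentially its restatement in the notation of the system model. The only points requiring a little care are the bookkeeping of the integration limits when $g_\mathrm{SR}$ has unbounded support (as for the Nakagami-$m$ power), and noting that the additive term $F_{G_\mathrm{SR}}(0)$ is retained purely for generality, since it vanishes whenever $g_\mathrm{SR}$ places no mass at the origin.
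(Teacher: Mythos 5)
Your proposal is correct and follows essentially the same route as the paper: the authors likewise obtain the corollary by setting $Y=G_\mathrm{SR}G_\mathrm{RD}$, applying Theorem \ref{thm-product-cdf-pdf} with the substitution $u=F_{G_\mathrm{SR}}(g_\mathrm{SR})$, and using $F_{\gamma_\mathrm{D}}(\gamma_\mathrm{D})=F_{Y}\big(\tfrac{\gamma_\mathrm{D}}{\hat{\gamma}_\mathrm{D}}\big)$. Your version simply spells out the change-of-variable bookkeeping (Jacobian, sign, limits) that the paper leaves implicit.
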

	\begin{proof}
		Let $Y=G_\mathrm{SR}G_\mathrm{RD}$ and $u=F_{G_\mathrm{SR}}(g_\mathrm{SR})$ in Theorem \ref{thm-product-cdf-pdf}, and using the fact that $F_{\gamma_\mathrm{D}}(\gamma_\mathrm{D})=F_{Y}\big(\frac{\gamma_\mathrm{D}}{\hat{\gamma}_\mathrm{D}}\big)$
		, the proof is completed.
	\end{proof}
	Note that Corollary \ref{col-pdf} is valid for any arbitrary choice of fading distributions as well as copula functions. However, for exemplary purposes, we assume in the sequel that the $SR$ and $RD$ fading channel coefficients (i.e., $h_{\mathrm{SR}}$ and $h_\mathrm{RD}$) follow the Nakagami-$m$ distribution, where the parameter $m\geq 0.5$ denotes  fading severity.
	Hence, the corresponding fading power channel coefficients $g_i$ for $i\in\{SR,RD\}$ are dependent Gamma RVs, \textcolor{blue}{and} we have following marginal distributions:
	\begin{align}
		f_{G_i}(g_i)=\frac{{m_i}^{m_i}}{\Gamma(m_i){\overline{g}_i}^{m_i}}{g_i}^{m_i-1}e^{-\frac{m_i}{\overline{g}_i}g_i},\label{pdf-g}
	\end{align}
	\begin{align}
		F_{G_i}(g_i)&=1-\frac{\Gamma(m_i,\frac{m_i}{\overline{g}_i}g_i)}{\Gamma(m_i)}\\
		&=1-e^{-\frac{m_i}{\overline{g}_i}g_i}\sum_{k=0}^{m_i-1}\frac{1}{k!}\textcolor{blue}{\left(\frac{m_i}{\overline{g}_i}g_i\right)^k},\label{cdf-g}
	\end{align}
	where $\overline{g}_i=\mathbb{E}[g_i]$ are the average of corresponding fading power channel coefficients, and $m_i$ are shape parameters. 
	
	Although there are many copula functions that can be used to evaluate the structure of dependency beyond linear correlation, we exploit the FGM copula in our analysis. This choice is justified because it allows capturing both negative and positive dependencies between the RVs while offering good mathematical tractability, at the expense of a certain inability to model scenarios with strong dependencies \cite{sriboonchitta2018fgm}. \textcolor{blue}{In any case, the choice of Copula is known to depend on the specific geometry and propagation conditions for the scenario under consideration, although the ability to capture \textit{both} positive and negative dependence structures is required \cite{jorswieck2020copula}.} As will be shown in section \ref{sec-results}, the use of the FGM copula is enough for our purposes of determining the effect of negative/positive correlation between $SR$ and $RD$ links. 
	\begin{definition}\label{def-fgm}[FGM copula] The bivariate FGM copula with dependence parameter $\theta\in[-1,1]$ is defined as:
		\begin{align}\label{fgm}
			C_\mathcal{F}(u_1,u_2)=u_1u_2(1+\theta(1-u_1)(1-u_2)),
		\end{align}
		where $\theta\in[-1,0)$ and $\theta\in(0,1]$ denote the negative and positive dependence structures respectively, while $\theta=0$ \textit{always} indicates the independence structure.
	\end{definition}
	\begin{theorem}\label{thm-cdf-pdf}
		The CDF 
		of $\gamma_\mathrm{D}$ under correlated Nakagami-$m$ fading channels using the FGM copula is given by \eqref{cdf}, 
		\begin{figure*}[t]
			\normalsize
			\setcounter{equation}{13}
			\begin{align}\nonumber
				\hspace{-0.25cm}F_{\gamma_\mathrm{D}}(\gamma_\mathrm{D})=&\,1-
				\sqrt{2\mathcal{B}}\Bigg(\sum_{n=0}^{m-1}a_n   \gamma_{\mathrm{D}}^{\frac{m+n}{2}} K_{n-m}\left(\zeta\sqrt{\gamma_\mathrm{D}}\right)+\theta\Bigg[\sum_{n=0}^{m-1}a_n  \gamma_\mathrm{D}^{\frac{m+n}{2}} K_{n-m}\left(\zeta\sqrt{\gamma_\mathrm{D}}\right)-\sum_{k=0}^{m-1}\sum_{n=0}^{m-1}b_{k,n} \gamma_\mathrm{D}^{\frac{k+n+m}{2}} K_{n-k-m}\left(\zeta\sqrt{2\gamma_\mathrm{D}}\right)\\
				&-\sum_{n=0}^{m-1}\sum_{l=0}^{m-1}c_{n,l} \gamma_\mathrm{D}^{\frac{l+m+n}{2}} K_{l-m+n}\left(\zeta\sqrt{2\gamma_\mathrm{D}}\right)+\sum_{k=0}^{m-1}\sum_{n=0}^{m-1}\sum_{l=0}^{m-1}d_{k,n,l}\gamma_\mathrm{D}^{\frac{k+n+l+m}{2}} K_{n+l-k-m}\left(2\zeta\sqrt{\gamma_\mathrm{D}}\right)\Bigg]\Bigg).\label{cdf}
			\end{align}
			\hrulefill
		\end{figure*}
		\begin{figure*}[t]
			\normalsize
			\setcounter{equation}{14}
			\begin{align}\nonumber
				f_{\gamma_\mathrm{D}}(\gamma_\mathrm{D})=\,&\mathcal{B}\bigg(\gamma_\mathrm{D}^{m-1}K_{0}\left(\zeta\sqrt{\gamma_\mathrm{D}}\right)+\theta\bigg[\gamma_\mathrm{D}^{m-1}K_{0}\left(\zeta\sqrt{\gamma_\mathrm{D}}\right)-\sum_{k=0}^{m-1}q_k \gamma_\mathrm{D}^{\frac{k}{2}+m-1}
				K_{k}\left(\zeta\sqrt{2\gamma_\mathrm{D}}\right)\\
				&+\sum_{k=0}^{m-1}\sum_{n=0}^{m-1}t_{k,n}\gamma_\mathrm{D}^{\frac{k+n}{2}+m-1}K_{n-k}\left(2\zeta\sqrt{\gamma_\mathrm{D}}\right)\bigg]\bigg).\label{pdf-d}
			\end{align}
			\hrulefill
		\end{figure*}
		where $\mathcal{B}=\frac{2{m}^{2m}}{\hat{\gamma}^m_\mathrm{D}\Gamma(m)^2}$, $\zeta=\frac{2m}{\sqrt{\hat{\gamma}_\mathrm{D}}}$, and $K_v(.)$ is the modified Bessel function of the second kind and order $v$. The coefficients $a_n$, $b_{k,n}$, $c_{n,l}$, and $d_{k,n,l}$ are also respectively defined as:
		\begin{align}\nonumber
			&a_n=\frac{m^n}{\hat{\gamma}_\mathrm{D}^{\frac{n}{2}}n!}, \, b_{k,n}=\frac{m^{k+n}2^{\frac{n-k-m+2}{2}}}{\hat{\gamma}_\mathrm{D}^{\frac{k+n}{2}}k!n!}, \, c_{n,l}=\frac{m^{l+n}2^{\frac{-l+m-n}{2}}}{\hat{\gamma}_\mathrm{D}^{\frac{n+l}{2}}n!l!},\\\nonumber
			&d_{k,n,l}=\frac{2 m^{k+n+l}}{\hat{\gamma}_\mathrm{D}^{\frac{k+n+l}{2}}k!n!l!}. 
		\end{align}
	\end{theorem}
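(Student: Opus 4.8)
The plan is to start from the general expression in Corollary~\ref{col-pdf} and specialize it to the Nakagami-$m$ (Gamma power) marginals in \eqref{pdf-g}--\eqref{cdf-g} together with the FGM copula of Definition~\ref{def-fgm}. First I would compute the partial derivative $\frac{\partial}{\partial u}C_\mathcal{F}(u,v)$ with $u=F_{G_\mathrm{SR}}(g_\mathrm{SR})$ and $v=F_{G_\mathrm{RD}}\!\left(\frac{\gamma_\mathrm{D}}{\hat\gamma_\mathrm{D} g_\mathrm{SR}}\right)$. Since the FGM copula is a low-degree polynomial in its arguments, this derivative is $v\bigl(1+\theta(1-2u)(1-v)\bigr)$, so the integrand in \eqref{def-cdf-d} splits into an ``independent'' part (the $\theta=0$ term) plus a $\theta$-weighted correction. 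Because $g_\mathrm{SR}>0$ almost surely, $\mathrm{sgn}(g_\mathrm{SR})=1$ and $F_{G_\mathrm{SR}}(0)=0$, which removes the boundary term. Changing the variable of integration back from $u$ to $g_\mathrm{SR}$ (so that $\frac{\partial}{\partial u}\,du$ becomes $dg_\mathrm{SR}$ and the chain rule supplies $f_{G_\mathrm{SR}}(g_\mathrm{SR})$, exactly as in the statement of Corollary~\ref{col-pdf}) turns the whole thing into an integral over $g_\mathrm{SR}\in(0,\infty)$.

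Next I would substitute the explicit Gamma forms. The marginal CDF in \eqref{cdf-g} is a finite sum $1-e^{-\frac{m}{\bar g}g}\sum_{k=0}^{m-1}\frac{1}{k!}(\tfrac{m}{\bar g}g)^k$ (taking $\bar g_i=1$ after normalization), so each occurrence of $F_{G_\mathrm{RD}}(\cdot)$ and of $1-F_{G_\mathrm{SR}}(\cdot)$ expands into finitely many monomial-times-exponential terms; multiplying these out against $f_{G_\mathrm{SR}}(g_\mathrm{SR})\propto g_\mathrm{SR}^{m-1}e^{-m g_\mathrm{SR}}$ produces, for every combination of summation indices, an elementary integral of the canonical shape
\begin{align}\nonumber
\int_0^\infty g^{\,p-1}\exp\!\Bigl(-A\,g-\frac{B}{g}\Bigr)\,dg = 2\Bigl(\tfrac{B}{A}\Bigr)^{p/2}K_p\!\bigl(2\sqrt{AB}\bigr),
\end{align}
with $A$ a multiple of $m$ (either $m$ or $2m$ depending on how many exponential factors collide), $B$ a multiple of $m\gamma_\mathrm{D}/\hat\gamma_\mathrm{D}$, and $p$ a linear combination of the indices and $m$. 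This is the standard Gradshteyn--Ryzhik integral [3.471.9] that converts products of exponentials with reciprocal arguments into modified Bessel functions $K_v$, and it is exactly what generates the $K_{n-m}$, $K_{n-k-m}$, $K_{l-m+n}$, $K_{n+l-k-m}$ appearing in \eqref{cdf}, together with the arguments $\zeta\sqrt{\gamma_\mathrm{D}}$, $\zeta\sqrt{2\gamma_\mathrm{D}}$, $2\zeta\sqrt{\gamma_\mathrm{D}}$ reflecting the three possible values of $\sqrt{AB}$. Collecting the prefactors into $\mathcal{B}$, $\zeta$ and the coefficients $a_n,b_{k,n},c_{n,l},d_{k,n,l}$ yields \eqref{cdf}. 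The PDF expression \eqref{pdf-d} then follows either by differentiating \eqref{cdf} term-by-term (using $\frac{d}{dx}[x^\nu K_\nu(x)]$-type identities and $K_{-\nu}=K_\nu$ to simplify) or, more cleanly, by repeating the same computation starting from $f_{\gamma_\mathrm{D}}(\gamma_\mathrm{D})=\int_0^\infty \frac{1}{\hat\gamma_\mathrm{D} g}f_{G_\mathrm{SR}}(g)f_{G_\mathrm{RD}}\!\bigl(\tfrac{\gamma_\mathrm{D}}{\hat\gamma_\mathrm{D} g}\bigr)c_\mathcal{F}\bigl(F_{G_\mathrm{SR}}(g),F_{G_\mathrm{RD}}(\tfrac{\gamma_\mathrm{D}}{\hat\gamma_\mathrm{D} g})\bigr)\,dg$ with the FGM copula density $c_\mathcal{F}(u,v)=1+\theta(1-2u)(1-2v)$; the fewer cross terms there explain why \eqref{pdf-d} has only three bracketed pieces instead of four.

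The main obstacle is purely bookkeeping rather than conceptual: one must carefully track the combinatorics of which exponential factors multiply together (this determines whether $A=m$ or $A=2m$, hence whether the Bessel argument carries a $\sqrt{2}$), keep the inverse-function/sign bookkeeping from Theorem~\ref{thm-product-cdf-pdf} straight when passing from $u$ back to $g_\mathrm{SR}$, and ensure the finite sums over $k,n,l$ all run to $m-1$ consistently with the integer-$m$ assumption implicit in \eqref{cdf-g}. A secondary subtlety is handling the order of the Bessel function when $p<0$: negative orders are legitimate since $K_{-\nu}=K_\nu$, so no convergence issue arises, but one should note it so the coefficient tables match. Once the integral template above is invoked uniformly for each of the (at most) four groups of terms generated by expanding $C_\mathcal{F}$, the closed forms \eqref{cdf} and \eqref{pdf-d} drop out after relabeling constants. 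I would therefore organize the proof as: (i) specialize Corollary~\ref{col-pdf} with FGM and compute $\partial_u C_\mathcal{F}$; (ii) expand the Gamma CDFs/PDF into monomial-exponential sums; (iii) apply the GR~[3.471.9] integral term-by-term; (iv) regroup constants to obtain \eqref{cdf}; (v) obtain \eqref{pdf-d} analogously from the mixed-product density with $c_\mathcal{F}$, and defer the detailed algebra to an appendix.
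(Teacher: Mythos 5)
Your proposal follows essentially the same route as the paper's Appendix B proof: specialize Corollary~\ref{col-pdf} with the FGM copula (the derivative $v\bigl(1+\theta(1-2u)(1-v)\bigr)$ is exactly the integrand the paper writes via survival-function groupings $\mathcal{I}_1$--$\mathcal{I}_4$), expand the integer-$m$ Gamma marginals into monomial-exponential sums, and evaluate each term with the $\int_0^\infty g^{p-1}\mathrm{e}^{-Ag-B/g}dg$ Bessel identity before regrouping constants into $\mathcal{B}$, $\zeta$, $a_n$, $b_{k,n}$, $c_{n,l}$, $d_{k,n,l}$. The only cosmetic difference is that you derive the PDF \eqref{pdf-d} directly from the copula density, whereas the paper simply cites it from prior work; this does not change the argument for the CDF claim.
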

	\begin{proof}
		The details of proof are in Appendix \ref{app-thm-cdf-pdf}.
	\end{proof}
	The probability density function (PDF) of $\gamma_{\mathrm{D}}$ was also obtained in \cite[Thm. 4]{ghadi2022capacity} as \eqref{pdf-d}, where the coefficients $q_k$ and $t_{k,n}$ are given by $q_k=\frac{2^{2-\frac{k}{2}}m^k}{\hat{\gamma}_\mathrm{D}^{\frac{k}{2}}k!}$ and  $t_{k,n}=\frac{4m^{k+n}}{\hat{\gamma}_\mathrm{D}^{\frac{k+n}{2}}k!n!}$, respectively. It should be noted that the closed-form expressions of the CDF and PDF provided in \eqref{cdf} and \eqref{pdf-d} are valid for integer values of $m$, while the integral-form expressions can be used for arbitrary positive real values of $m$. In addition, the PDF and the CDF of $\gamma_\mathrm{R}$ are also given by
	\begin{align}
		f_{\gamma_\mathrm{R}}(\gamma_\mathrm{R})=\frac{{m}^{m}}{\Gamma(m){\hat{\gamma}_\mathrm{R}}^{m}}{\gamma_\mathrm{R}}^{m-1}\mathrm{e}^{-\frac{m}{\hat{\gamma}_\mathrm{R}}\gamma_\mathrm{R}},\label{pdf-r}
	\end{align}
	\begin{align}
		F_{\gamma_\mathrm{R}}(\gamma_\mathrm{R})=1-\mathrm{e}^{-\frac{m}{\hat{\gamma}_\mathrm{R}}\gamma_\mathrm{R}}\sum_{k=0}^{m-1}\frac{1}{k!}\textcolor{blue}{\left(\frac{m}{\hat{\gamma}_\mathrm{R}}\gamma_\mathrm{R}\right)^k}.\label{cdf-r}
	\end{align}

	\begin{corollary}\label{col-expected}
		\textcolor{blue}{The average SNR $\overline\gamma_{\mathrm{D}}=\hat\gamma_{\mathrm{D}}\mathbb{E}\{g_\mathrm{SR}g_\mathrm{RD}\}$ is given by}
    
	\begin{equation}
	\textcolor{blue}{\overline\gamma_{\mathrm{D}}=\hat\gamma_{\mathrm{D}}\left(1+\theta\left(1-\mathbb{f}(m)\right)\right)\label{def-exp-d},}
	\end{equation}
    {\textcolor{blue}{where}} 
    \begin{equation}
        \textcolor{blue}{\mathbb{f}(m)=\sum_{k=0}^{m-1}\tfrac{2^{-\left(m+k-1\right)}}{m {\mathrm{B}}(m,k+1)}-\sum_{k=0}^{m-1}\sum_{n=0}^{m-1}\tfrac{2^{-\left(2m+k+n\right)}}{m^2 {\mathrm{B}}(m,k+1){\mathrm{B}}(m,n+1)},}
    \end{equation}
    \textcolor{blue}{and ${\mathrm{B}}(\cdot,\cdot)$ is the Beta function.}
	\end{corollary}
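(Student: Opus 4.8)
The plan is to obtain the joint density of the correlated pair $(g_\mathrm{SR},g_\mathrm{RD})$ from the FGM copula, evaluate $\mathbb{E}\{g_\mathrm{SR}g_\mathrm{RD}\}$ by a direct integration, and then recognize the resulting finite sums as $1+\theta(1-\mathbb{f}(m))$. Differentiating the FGM copula of Definition~\ref{def-fgm} twice and invoking Sklar's theorem (Theorem~\ref{thm-sklar}) gives the joint PDF
\begin{align}\nonumber
&f_{G_\mathrm{SR},G_\mathrm{RD}}(g_\mathrm{SR},g_\mathrm{RD})=f_{G_\mathrm{SR}}(g_\mathrm{SR})\,f_{G_\mathrm{RD}}(g_\mathrm{RD})\\
&\quad\times\bigl[1+\theta\bigl(1-2F_{G_\mathrm{SR}}(g_\mathrm{SR})\bigr)\bigl(1-2F_{G_\mathrm{RD}}(g_\mathrm{RD})\bigr)\bigr].
\end{align}
Since the copula correction factorizes across the two variables, the double integral defining $\mathbb{E}\{g_\mathrm{SR}g_\mathrm{RD}\}$ separates into $\mathbb{E}\{g_\mathrm{SR}\}\mathbb{E}\{g_\mathrm{RD}\}+\theta\,I_\mathrm{SR}I_\mathrm{RD}$ with $I_i=\int_0^\infty g\,f_{G_i}(g)\bigl(1-2F_{G_i}(g)\bigr)\,dg$. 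With the normalization $\overline g_i=1$ and common fading figure $m_i=m$, we have $\mathbb{E}\{g_i\}=1$ and $I_\mathrm{SR}=I_\mathrm{RD}=:I$, so $\overline\gamma_\mathrm{D}=\hat\gamma_\mathrm{D}\mathbb{E}\{g_\mathrm{SR}g_\mathrm{RD}\}=\hat\gamma_\mathrm{D}\bigl(1+\theta I^2\bigr)$, and it remains to show $I^2=1-\mathbb{f}(m)$.

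To evaluate $I=1-2J$ with $J=\int_0^\infty g\,f_G(g)F_G(g)\,dg$, I would substitute the finite-sum CDF \eqref{cdf-g} (valid for integer $m$) together with the Gamma PDF \eqref{pdf-g}; the $k$-th term then contains $\int_0^\infty g^{k+m}e^{-2mg}\,dg=\Gamma(k+m+1)/(2m)^{k+m+1}$. Collecting constants, the simplification $m^{k+m}/(2m)^{k+m+1}=1/(2^{k+m+1}m)$ and the identity $\Gamma(m+k+1)/(\Gamma(m)\,k!)=1/\mathrm{B}(m,k+1)$ reduce $J$ to $1-\sum_{k=0}^{m-1}2^{-(m+k+1)}/\bigl(m\,\mathrm{B}(m,k+1)\bigr)$, whence $I=-1+S_1$ with $S_1:=\sum_{k=0}^{m-1}2^{-(m+k)}/\bigl(m\,\mathrm{B}(m,k+1)\bigr)$. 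Expanding $I^2=(1-S_1)^2=1-2S_1+S_1^2$, the linear term $2S_1$ is precisely the single sum in $\mathbb{f}(m)$ (writing $2\cdot 2^{-(m+k)}=2^{-(m+k-1)}$) and $S_1^2$ is precisely the double sum in $\mathbb{f}(m)$, so $I^2=1-\mathbb{f}(m)$ and the claim follows after multiplying by $\hat\gamma_\mathrm{D}$.

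I expect the only real difficulty to be bookkeeping: correctly tracking the powers of $2$ and $m$ when simplifying $m^{k+m}/(2m)^{k+m+1}$, and noticing that the final answer is the perfect square $(1-S_1)^2$, which is exactly what makes the single and double sums of $\mathbb{f}(m)$ assemble. I would also flag explicitly that the closed form needs integer $m$, so that \eqref{cdf-g} truncates to a finite sum; for arbitrary positive real $m$ the same steps leave $I$, and hence $\overline\gamma_\mathrm{D}$, in an integral/incomplete-Gamma form rather than the stated closed form.
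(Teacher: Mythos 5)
Your proposal is correct, and it reaches \eqref{def-exp-d} by a genuinely different route than the paper. The paper's proof computes $\overline\gamma_\mathrm{D}=\int_0^\infty \gamma_\mathrm{D} f_{\gamma_\mathrm{D}}(\gamma_\mathrm{D})\,d\gamma_\mathrm{D}$ directly from the product-SNR density \eqref{pdf-d}, i.e.\ it integrates powers of $\gamma_\mathrm{D}$ against modified Bessel functions $K_\nu$ and then simplifies the resulting table integrals into the Beta-function sums. You instead stay at the level of the pair $(g_\mathrm{SR},g_\mathrm{RD})$: the FGM density correction $1+\theta(1-2u_1)(1-2u_2)$ factorizes, so $\mathbb{E}\{g_\mathrm{SR}g_\mathrm{RD}\}=1+\theta I^2$ with a single elementary Gamma-type integral $I=\int_0^\infty g f_G(g)\left(1-2F_G(g)\right)dg=-1+S_1$, $S_1=\sum_{k=0}^{m-1}\tfrac{2^{-(m+k)}}{m\mathrm{B}(m,k+1)}$, and the stated $\mathbb{f}(m)$ is exactly $2S_1-S_1^2$, so $1-\mathbb{f}(m)=(1-S_1)^2=I^2$. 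I verified the bookkeeping (e.g.\ $m=1$ gives $I=-1/2$ and $\overline\gamma_\mathrm{D}=\hat\gamma_\mathrm{D}(1+\theta/4)$, consistent with the known FGM-exponential correlation), so the identification is sound. What your route buys is that it avoids Bessel-function integrals entirely and makes the structure of the result transparent: the dependence correction is $\theta$ times a perfect square, which immediately shows that positive (negative) $\theta$ raises (lowers) the average SNR, the monotonicity the paper only asserts after the corollary. You are also right to flag, as the paper does for \eqref{cdf}--\eqref{pdf-d}, that the closed form requires integer $m$ so that \eqref{cdf-g} truncates; for non-integer $m$ your $I$ remains as an incomplete-Gamma-type integral.
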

	\begin{proof} \textcolor{blue}{Using \eqref{pdf-d} and the definition of expectation, and after some manipulations, \eqref{def-exp-d} is obtained.}
        \end{proof}

\textcolor{blue}{Corollary \ref{col-expected} shows that the average SNR at the destination is increased (decreased) due to positive (negative) dependence by a scaling factor that depends on the fading severity, compared to the case of independence. 
This indicates that the role of positive/negative dependence has an impact on the average SNR, which in turn will impact the system performance, as discussed next. In the sequel,} we exemplify how the key performance metrics of interest, i.e., the ergodic capacity and outage probability, can be characterized in the closed-form expressions for the specific case of using the Nakagami-$m$ fading and the FGM copula. 
	\section{Performance Analysis: Ergodic Capacity and Outage Probability}\label{sec-metrics}
	In this section, we derive analytical expressions for the ergodic capacity and the outage probability for the considered system model under dependent Nakagami-$m$ fading channels.
	\subsection{Ergodic Capacity}\label{subsec-capacity}
	In the considered dual-hop relay network, the instantaneous capacity is defined as \cite{laneman2004cooperative}:
	\begin{align}
		\mathcal{C}=\min(\mathcal{C}_\mathrm{SR},\mathcal{C}_\mathrm{RD}),\label{c-def}
	\end{align}
	where $\mathcal{C}_\mathrm{SR}$ and $\mathcal{C}_\mathrm{RD}$ are the instantaneous capacity of the $SR$ and $RD$ links, respectively, which can defined as follows:
	\begin{align}
		\mathcal{C}_\mathrm{SR}=\frac{1}{2}\log_2\left(1+\gamma_\mathrm{R}\right),\label{c-sr}
	\end{align}
	\begin{align}
		\mathcal{C}_\mathrm{RD}=\frac{1}{2}\log_2\left(1+\gamma_\mathrm{D}\right).\label{c-rd1}
	\end{align}
	
	\begin{theorem}
		The ergodic capacity of the $SR$ link for the considered system model under Nakagami-$m$ fading channel is given by
		\begin{align}
			\overline{\mathcal{C}}_\mathrm{SR}=\frac{1}{2\Gamma(m)\ln 2}G_{3,2}^{1,3}\left(\begin{array}{c}
				\frac{\hat{\gamma}_\mathrm{R}}{m}\end{array}
			\Bigg\vert\begin{array}{c}
				\left(1-\frac{m}{\hat{\gamma}_\mathrm{R}},1,1\right)\\
				(1,0)\\
			\end{array}\right).\label{c-srf}
		\end{align}
	\end{theorem}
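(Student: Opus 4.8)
The plan is to reduce $\overline{\mathcal{C}}_\mathrm{SR}$ to a single Laplace‑type integral of a logarithm against the Gamma density and then to recognize that integral as a Meijer $G$‑function. First I would use \eqref{c-def}--\eqref{c-sr}: since the relevant rate $\mathcal{C}_\mathrm{SR}=\tfrac12\log_2(1+\gamma_\mathrm{R})$ depends on the randomness only through $\gamma_\mathrm{R}$, its ergodic value is
\begin{equation}
\overline{\mathcal{C}}_\mathrm{SR}=\frac{1}{2\ln 2}\int_{0}^{\infty}\ln(1+\gamma)\,f_{\gamma_\mathrm{R}}(\gamma)\,d\gamma .
\end{equation}
Substituting the Nakagami‑$m$/Gamma density \eqref{pdf-r} and pulling the constants outside gives
\begin{equation}
\overline{\mathcal{C}}_\mathrm{SR}=\frac{m^{m}}{2\,\hat{\gamma}_\mathrm{R}^{m}\,\Gamma(m)\ln 2}\int_{0}^{\infty}\gamma^{m-1}\,e^{-\frac{m}{\hat{\gamma}_\mathrm{R}}\gamma}\,\ln(1+\gamma)\,d\gamma ,
\end{equation}
so everything now hinges on the integral $\mathcal{I}=\int_{0}^{\infty}\gamma^{m-1}e^{-s\gamma}\ln(1+\gamma)\,d\gamma$ with $s=m/\hat{\gamma}_\mathrm{R}$.

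Next I would evaluate $\mathcal{I}$ in closed form via Meijer $G$‑functions. Writing $\ln(1+\gamma)$ through its Meijer‑$G$ representation (the $G^{1,2}_{2,2}$ function with upper parameters $1,1$ and lower parameters $1,0$), the integrand becomes a power weight times an exponential times a $G$‑function, i.e.\ $\mathcal{I}$ is the Laplace transform of a Meijer‑$G$ function. Invoking the standard tabulated identity for $\int_0^\infty x^{\rho-1}e^{-sx}G^{\mu,\nu}_{p,q}(\omega x\,|\,\cdot)\,dx$ — which equals $s^{-\rho}$ times a $G^{\mu,\nu+1}_{p+1,q}(\omega/s\,|\,\cdot)$ whose top parameter list is that of the original $G$ prepended with $1-\rho$ — with $\rho=m$, $\omega=1$, $s=m/\hat{\gamma}_\mathrm{R}$, produces a $G^{1,3}_{3,2}$ evaluated at $\hat{\gamma}_\mathrm{R}/m$, multiplied by the scale factor $(m/\hat{\gamma}_\mathrm{R})^{-m}=\hat{\gamma}_\mathrm{R}^{m}/m^{m}$. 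That scale factor exactly cancels the $m^{m}/\hat{\gamma}_\mathrm{R}^{m}$ prefactor, and collecting the parameter lists yields precisely \eqref{c-srf}. Note that, unlike the destination‑side expressions, this representation holds for arbitrary real $m\ge\tfrac12$, not only integers.

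As a cross‑check and fallback I would also pursue the complementary‑CDF route: integrating $\mathcal{C}_\mathrm{SR}$ by parts gives $\overline{\mathcal{C}}_\mathrm{SR}=\tfrac{1}{2\ln2}\int_0^\infty \bigl(1-F_{\gamma_\mathrm{R}}(\gamma)\bigr)/(1+\gamma)\,d\gamma$; using \eqref{cdf-r} (equivalently the normalized upper incomplete Gamma function) and representing both $1-F_{\gamma_\mathrm{R}}$ and $(1+\gamma)^{-1}$ as Meijer‑$G$ functions, the Mellin‑convolution formula for the integral of a product of two $G$‑functions returns the same closed form. The main obstacle here is not analytic but clerical: correctly tracking the Meijer‑$G$ parameter lists — in particular where the extra parameter $1-\rho$ is inserted and how $\mu,\nu,p,q$ increment — and checking the (easily met) convergence conditions of the integral identity; the exponential decay together with $m>0$ guarantees absolute convergence, so no delicate contour argument is required.
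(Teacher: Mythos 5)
Your proposal is correct and follows essentially the same route as the paper: write $\overline{\mathcal{C}}_\mathrm{SR}$ as the integral of $\ln(1+\gamma)$ against the Gamma density \eqref{pdf-r}, replace $\ln(1+\gamma)$ by its $G^{1,2}_{2,2}$ Meijer-$G$ representation, and evaluate the resulting Laplace-type integral with the tabulated identity (the paper cites Prudnikov, Eq.~(2.24.3.1)), the factor $(\hat{\gamma}_\mathrm{R}/m)^m$ cancelling the prefactor exactly as you describe. One caveat: the identity as you (correctly) state it prepends $1-\rho=1-m$ to the upper parameter list, whereas \eqref{c-srf} displays $1-\tfrac{m}{\hat{\gamma}_\mathrm{R}}$; your derivation therefore produces the parameter $1-m$, and the printed parameter in the theorem appears to be a typo, so the claim that your calculation yields ``precisely'' \eqref{c-srf} should be qualified accordingly.
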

	\begin{proof}
		The ergodic capacity given in \eqref{c-sr} can be further mathematically expressed as:
		\begin{align}
			\overline{\mathcal{C}}_\mathrm{SR}&=\frac{1}{2\ln 2}\int_{0}^{\infty}\ln\left(1+\gamma_\mathrm{R}\right)f_{\gamma_\mathrm{R}}(\gamma_\mathrm{R})d\gamma_\mathrm{R},\label{def-cr}
		\end{align}
		where $f_{\gamma_\mathrm{R}}(\gamma_\mathrm{R})$ is given by \eqref{pdf-r}. Next, by re-expressing the logarithm function in terms of the Meijer's G-function \cite[Eq. (11)]{adamchik1990algorithm}, i.e.,
		\begin{align}
			\ln(1+x)=
			G_{2,2}^{1,2}\left(\begin{array}{c}
				x\end{array}
			\Bigg\vert\begin{array}{c}
				(1,1)\\
				(1,0)\\
			\end{array}\right),\label{log}
		\end{align}
		substituting \eqref{pdf-r} and \eqref{log} in \eqref{def-cr}, $\overline{\mathcal{C}}_\mathrm{SR}$ can be re-written as:
		\begin{align}\nonumber
			\overline{\mathcal{C}}_\mathrm{SR}&=\frac{m^m}{2\hat{\gamma}_\mathrm{R}^m\Gamma(m)\ln 2}\\
			&\times\underset{\mathcal{I}}{\underbrace{\int_{0}^{\infty}\gamma^{m-1}_\mathrm{R}\mathrm{e}^{-\frac{m}{\overline{\gamma}_\mathrm{R}}\gamma_\mathrm{R}}G_{2,2}^{1,2}\left(\begin{array}{c}
						\gamma_\mathrm{R}\end{array}
					\Bigg\vert\begin{array}{c}
						(1,1)\\
						(1,0)\\
					\end{array}\right)d\gamma_\mathrm{R}}}.\label{def-cr2}
		\end{align}
		With the help of \cite[Eq. (2.24.3.1)]{prudnikov1990more}, $\mathcal{I}$ can be computed as:
		\begin{align}
			\mathcal{I}=\frac{\hat{\gamma}_\mathrm{R}^m}{m^m}G_{3,2}^{1,3}\left(\begin{array}{c}
				\frac{\hat{\gamma}_\mathrm{R}}{m}\end{array}
			\Bigg\vert\begin{array}{c}
				\left(1-\frac{m}{\hat{\gamma}_\mathrm{R}},1,1\right)\\
				(1,0)\\
			\end{array}\right).\label{I}
		\end{align}
		Now, by inserting \eqref{I} into \eqref{def-cr2} the proof is completed.
	\end{proof}
	\begin{theorem}
		The ergodic capacity of the $RD$ link for the considered system model under Nakagami-$m$ fading channel is given by \eqref{c-rd}, where $\mathcal{D}$, $w_k$, and $z_{k,n}$ are respectively defined as:
		\begin{align}\nonumber
			\mathcal{D}=\frac{2^{2m-2}\mathcal{B}}{\pi\zeta^{2m}\ln2},\quad w_k=\frac{2^{2-m}m^k}{\hat{\gamma}_\mathrm{D}^\frac{k}{2}\zeta^{k}k!},\quad z_{k,n}\frac{2^{2-2m}m^{k+n}}{\hat{\gamma}_\mathrm{D}^{\frac{k+n}{2}}\zeta^{n+k}k!n!}.	
		\end{align}
		\begin{figure*}[!t]
			\normalsize
			\setcounter{equation}{25}
			\begin{align}\nonumber
				&\overline{\mathcal{C}}_\mathrm{RD}=\,\Bigg(G_{4,2}^{1,4}\left(\begin{array}{c}
					\frac{4}{\zeta^2}\end{array}
				\Bigg\vert\begin{array}{c}
					\left(1-m,1-m,1,1\right)\\
					\left(1,0\right)\\
				\end{array}\right)+\theta\Bigg[G_{4,2}^{1,4}\left(\begin{array}{c}
					\frac{4}{\zeta^2}\end{array}
				\Bigg\vert\begin{array}{c}
					\left(1-m,1-m,1,1\right)\\
					\left(1,0\right)\\
				\end{array}\right)\\
				&-\sum_{k=0}^{m-1}w_kG_{4,2}^{1,4}\left(\begin{array}{c}
					\frac{2}{\zeta^2}\end{array}
				\Bigg\vert\begin{array}{c}
					\left(1-(m+k),1-m,1,1\right)\\
					\left(1,0\right)\\
				\end{array}\right)+\sum_{k=0}^{m-1}\sum_{n=0}^{m-1}z_{k,n}G_{4,2}^{1,4}\left(\begin{array}{c}
					\frac{1}{\zeta^2}\end{array}
				\Bigg\vert\begin{array}{c}
					\left(1-(m+n),1-(m+k),1,1\right)\\
					(1,0)\\
				\end{array}\right)\Bigg]\Bigg).\label{c-rd}
			\end{align}
			\hrulefill
		\end{figure*}
	\end{theorem}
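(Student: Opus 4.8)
The plan is to mirror the derivation of $\overline{\mathcal{C}}_\mathrm{SR}$ carried out in the previous theorem. Starting from
\begin{align}
\overline{\mathcal{C}}_\mathrm{RD}=\frac{1}{2\ln2}\int_{0}^{\infty}\ln\left(1+\gamma_\mathrm{D}\right)f_{\gamma_\mathrm{D}}(\gamma_\mathrm{D})\,d\gamma_\mathrm{D},
\end{align}
I would substitute the PDF $f_{\gamma_\mathrm{D}}$ from \eqref{pdf-d}. Since \eqref{pdf-d} is a finite (double) sum whose generic summand is a constant times $\gamma_\mathrm{D}^{\,a}K_{\nu}\!\left(\zeta\sqrt{c\,\gamma_\mathrm{D}}\right)$, with $c\in\{1,2,4\}$ and Bessel order $\nu\in\{0,k,n-k\}$ across the three groups of terms, the ergodic capacity reduces to a finite sum of integrals of the form $\int_{0}^{\infty}\gamma_\mathrm{D}^{\,a}\,K_{\nu}\!\left(\zeta\sqrt{c\,\gamma_\mathrm{D}}\right)\ln(1+\gamma_\mathrm{D})\,d\gamma_\mathrm{D}$.

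First I would recast the two transcendental factors of each summand as Meijer $G$-functions with argument proportional to $\gamma_\mathrm{D}$: the logarithm through the representation already used in \eqref{log}, and the modified Bessel function through the standard identity $K_{\nu}(z)=\tfrac12\,G_{0,2}^{2,0}\!\left(\tfrac{z^{2}}{4}\,\middle|\,\tfrac{\nu}{2},-\tfrac{\nu}{2}\right)$, which yields $K_{\nu}\!\left(\zeta\sqrt{c\,\gamma_\mathrm{D}}\right)=\tfrac12\,G_{0,2}^{2,0}\!\left(\tfrac{c\,\zeta^{2}}{4}\gamma_\mathrm{D}\,\middle|\,\tfrac{\nu}{2},-\tfrac{\nu}{2}\right)$. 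Each summand then has the shape $\int_{0}^{\infty}\gamma_\mathrm{D}^{\,a}\,G_{2,2}^{1,2}\!\left(\gamma_\mathrm{D}\,\middle|\,\cdots\right)G_{0,2}^{2,0}\!\left(\lambda\gamma_\mathrm{D}\,\middle|\,\cdots\right)d\gamma_\mathrm{D}$ with $\lambda\in\{\zeta^{2}/4,\zeta^{2}/2,\zeta^{2}\}$, which is exactly the tabulated Mellin--Barnes integral of a product of two Meijer $G$-functions, \cite[Eq. (2.24.1.1)]{prudnikov1990more}. Applying it collapses the pair of $G$-functions into a single $G_{4,2}^{1,4}$ whose argument is $1/\lambda$, i.e. $4/\zeta^{2}$, $2/\zeta^{2}$, or $1/\zeta^{2}$, as displayed in \eqref{c-rd}; the upper parameter lists $(1-m,1-m,1,1)$, $(1-(m+k),1-m,1,1)$, $(1-(m+n),1-(m+k),1,1)$ arise from combining the Bessel-order entries $\pm\nu/2$ with the power shift $a$, while the lower list $(1,0)$ is inherited from the $G$-function of the logarithm. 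Collecting the common factor $\mathcal{B}$ of \eqref{pdf-d}, the $\tfrac12$ from each $K_\nu$, and the powers of $\zeta$ spawned by the argument rescalings produces the prefactor $\mathcal{D}$ and the coefficients $w_k$, $z_{k,n}$ stated in the theorem. Finally, since the $\theta$-free term of \eqref{pdf-d} coincides with the first bracketed $\theta$-term, together they give rise to the repeated leading $G_{4,2}^{1,4}(4/\zeta^{2}\,|\,\cdots)$ visible in \eqref{c-rd}.

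The main obstacle is bookkeeping rather than anything conceptual: one has to track carefully how the power shift $\gamma_\mathrm{D}^{\,a}$, the duplicated Bessel order $\pm\nu/2$, and the three distinct scalings $c\,\zeta^{2}/4$ interact when the two $G$-functions are merged, so that the resulting $G_{4,2}^{1,4}$ ends up with precisely the parameter lists in \eqref{c-rd} and the correct signs (a $-$ for the $w_k$ group and a $+$ for the $z_{k,n}$ group, matching the structure of \eqref{pdf-d}). One should also verify that the existence/convergence conditions of \cite[Eq. (2.24.1.1)]{prudnikov1990more} are satisfied for integer $m\ge1$ and $\hat{\gamma}_\mathrm{D}>0$, so that the closed form \eqref{c-rd} is legitimate; for non-integer $m$ one would instead halt at the single-integral representation, consistently with the earlier remark on integer values of $m$.
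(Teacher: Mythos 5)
Your proposal is correct and follows essentially the same route as the paper: substitute \eqref{pdf-d}, write $\ln(1+\gamma_\mathrm{D})$ as the Meijer $G$-function \eqref{log}, and evaluate the resulting integrals $\int_0^\infty \gamma_\mathrm{D}^{\,a}K_\nu\left(\zeta\sqrt{c\,\gamma_\mathrm{D}}\right)G_{2,2}^{1,2}(\gamma_\mathrm{D}\mid\cdots)\,d\gamma_\mathrm{D}$ in closed form. The only cosmetic difference is that the paper invokes the pre-tabulated Bessel-times-$G$ integral \cite[(2.24.4.3)]{prudnikov1990more} directly, which is precisely the specialization of the \cite[(2.24.1.1)]{prudnikov1990more} step you perform after converting $K_\nu$ to $G_{0,2}^{2,0}$, and it produces the same $G_{4,2}^{1,4}$ terms with arguments $4/\zeta^{2}$, $2/\zeta^{2}$, $1/\zeta^{2}$ and the stated coefficients.
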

	\begin{proof}
		The ergodic capacity given in \eqref{c-rd1} can be further mathematically defined as:
		\begin{align}
			\overline{\mathcal{C}}_\mathrm{RD}&=\frac{1}{2\ln 2}\int_{0}^{\infty}\ln\left(1+\gamma_\mathrm{D}\right)f_{\gamma_\mathrm{D}}(\gamma_\mathrm{D})d\gamma_\mathrm{D},\label{def-cd}
		\end{align}
		where $f_{\gamma_\mathrm{D}}(\gamma_\mathrm{D})$ is determined by Theorem \ref{thm-cdf-pdf} as \eqref{pdf-d}. Thus, by plugging \eqref{log} and \eqref{pdf-d} into \eqref{def-cd}, $\overline{\mathcal{C}}_\mathrm{RD}$ can be re-expressed as:
		\begin{align}\nonumber
			\overline{\mathcal{C}}_\mathrm{RD}=\,&\frac{\mathcal{B}}{2\ln 2}\Bigg(\mathcal{J}_1+\theta\Bigg[\mathcal{J}_1-\sum_{k=0}^{m-1}\frac{2^{2-\frac{k}{2}}m^k}{\hat{\gamma}_\mathrm{D}^{\frac{k}{2}}k!}\mathcal{J}_2\\
			&+\sum_{k=0}^{m-1}\sum_{n=0}^{m-1}\frac{4m^{k+n}}{\hat{\gamma}_\mathrm{D}^{\frac{k+n}{2}}k!n!}\mathcal{J}_3\Bigg]\Bigg),\label{c-rd2}
		\end{align}
		where
		\begin{align}
			\mathcal{J}_1=\int_{0}^{\infty}\gamma_\mathrm{D}^{m-1}K_0\left(\zeta\sqrt{\gamma_\mathrm{D}}\right)G_{2,2}^{1,2}\left(\begin{array}{c}
				\gamma_\mathrm{D}\end{array}
			\Bigg\vert\begin{array}{c}
				(1,1)\\
				(1,0)\\
			\end{array}\right)d\gamma_\mathrm{D},
		\end{align}
		\small\begin{align}
			\mathcal{J}_2=\int_{0}^{\infty}\gamma_\mathrm{D}^{\frac{k}{2}+m-1}
			K_{k}\left(\zeta\sqrt{2\gamma_\mathrm{D}}\right)G_{2,2}^{1,2}\left(\begin{array}{c}
				\gamma_\mathrm{D}\end{array}
			\Bigg\vert\begin{array}{c}
				(1,1)\\
				(1,0)\\
			\end{array}\right)d\gamma_\mathrm{D},
		\end{align}
		\small\begin{align}
			\mathcal{J}_3=\int_{0}^{\infty}\gamma_\mathrm{D}^{\frac{k+n}{2}+m-1}K_{n-k}\left(2\zeta\sqrt{\gamma_\mathrm{D}}\right)G_{2,2}^{1,2}\left(\begin{array}{c}
				\gamma_\mathrm{D}\end{array}
			\Bigg\vert\begin{array}{c}
				(1,1)\\
				(1,0)\\
			\end{array}\right)d\gamma_\mathrm{D}.
		\end{align}
		With the help of \cite[(2.24.4.3)]{prudnikov1990more}, the integrals $\mathcal{J}_1$, $\mathcal{J}_2$, and $\mathcal{J}_3$ can be respectively computed as follows:
		\begin{align}
			\mathcal{J}_1=\frac{2^{2m}}{2\pi\zeta^{2m}}G_{4,2}^{1,4}\left(\begin{array}{c}
				\frac{4}{\zeta^2}\end{array}
			\Bigg\vert\begin{array}{c}
				\left(1-m,1-m,1,1\right)\\
				\left(1,0\right)\\
			\end{array}\right)\label{j1},
		\end{align}
		\begin{align}
			\mathcal{J}_2=\frac{2^{m+\frac{k}{2}}}{2\pi\zeta^{2m+k}}G_{4,2}^{1,4}\left(\begin{array}{c}
				\frac{2}{\zeta^2}\end{array}
			\Bigg\vert\begin{array}{c}
				\left(1-(m+k),1-m,1,1\right)\\
				\left(1,0\right)\\
			\end{array}\right)\label{j2},
		\end{align}
		\begin{align}\nonumber
			&\mathcal{J}_3=\frac{1}{2\pi\zeta^{2m+n+k}}\\
			&\times G_{4,2}^{1,4}\left(\begin{array}{c}
				\frac{1}{\zeta^2}\end{array}
			\Bigg\vert\begin{array}{c}
				\left(1-(m+n),1-(m+k),1,1\right)\\
				(1,0)\\
			\end{array}\right)\label{j3}.
		\end{align}
		Now, by inserting \eqref{j1}, \eqref{j2}, and \eqref{j3} into \eqref{c-rd2}, the proof is completed. 
	\end{proof}
	\subsection{Outage Probability}\label{subsec-out}
	The outage probability is defined as the probability that the received SNR is less than a certain threshold
	$\gamma_\mathrm{t}$. Thus, we define the outage probability for the given dual-hop relay network as follows:
	\begin{align}
		P_\mathrm{o}=\Pr\left(\min(\gamma_\mathrm{R},\gamma_\mathrm{D})\leq\gamma_\mathrm{t}\right).\label{def-out}
	\end{align}
	\begin{theorem}
		The outage probability for the considered dual-hop SWIPT relay network over dependent Nakagami-$m$ fading channels is given by \eqref{out}.
		\begin{figure*}[t]
			\normalsize
			\setcounter{equation}{35}
			\small\begin{align}\nonumber
				&\hspace{-0.3cm}P_\mathrm{o}=
				1-\Bigg[\frac{\Gamma(m,\frac{m}{\hat{\gamma}_\mathrm{R}}\gamma_\mathrm{t})\sqrt{2\mathcal{B}}}{\Gamma(m)}\Bigg(\sum_{n=0}^{m-1}a_n   \gamma_\mathrm{t}^{\frac{m+n}{2}} K_{n-m}\left(\zeta\sqrt{\gamma_\mathrm{t}}\right)+\theta\Bigg[\sum_{n=0}^{m-1}a_n  \gamma_\mathrm{t}^{\frac{m+n}{2}} K_{n-m}\left(\zeta\sqrt{\gamma_\mathrm{t}}\right)-\sum_{k=0}^{m-1}\sum_{n=0}^{m-1}b_{k,n} \gamma_\mathrm{t}^{\frac{k+n+m}{2}} K_{n-k-m}\left(\zeta\sqrt{2\gamma_\mathrm{t}}\right)\\\nonumber
				&-\sum_{n=0}^{m-1}\sum_{l=0}^{m-1}c_{n,l} \gamma_\mathrm{t}^{\frac{l+m+n}{2}} K_{l-m+n}\left(\zeta\sqrt{2\gamma_\mathrm{t}}\right)+\sum_{k=0}^{m-1}\sum_{n=0}^{m-1}\sum_{l=0}^{m-1}d_{k,n,l}\gamma_\mathrm{t}^{\frac{k+n+l+m}{2}} K_{n+l-k-m}\left(2\zeta\sqrt{\gamma_\mathrm{t}}\right)\Bigg]\Bigg)\\\nonumber
				&\times\Bigg(1+\theta\Bigg(1-
				\sqrt{2\mathcal{B}}\Bigg(\sum_{n=0}^{m-1}a_n   \gamma_\mathrm{t}^{\frac{m+n}{2}} K_{n-m}\left(\zeta\sqrt{\gamma_\mathrm{t}}\right)+\theta\Bigg[\sum_{n=0}^{m-1}a_n  \gamma_\mathrm{t}^{\frac{m+n}{2}} K_{n-m}\left(\zeta\sqrt{\gamma_\mathrm{t}}\right)-\sum_{k=0}^{m-1}\sum_{n=0}^{m-1}b_{k,n} \gamma_\mathrm{t}^{\frac{k+n+m}{2}} K_{n-k-m}\left(\zeta\sqrt{2\gamma_\mathrm{t}}\right)\\
				&-\sum_{n=0}^{m-1}\sum_{l=0}^{m-1}c_{n,l} \gamma_\mathrm{t}^{\frac{l+m+n}{2}} K_{l-m+n}\left(\zeta\sqrt{2\gamma_\mathrm{t}}\right)+\sum_{k=0}^{m-1}\sum_{n=0}^{m-1}\sum_{l=0}^{m-1}d_{k,n,l}\gamma_\mathrm{t}^{\frac{k+n+l+m}{2}} K_{n+l-k-m}\left(2\zeta\sqrt{\gamma_\mathrm{t}}\right)\Bigg]\Bigg)\Bigg)\Bigg(1-\frac{\Gamma(m,\frac{m}{\hat{\gamma}_\mathrm{R}}\gamma_\mathrm{t})}{\Gamma(m)}\Bigg)\Bigg],\label{out}
			\end{align}
			\hrulefill
		\end{figure*}
	\end{theorem}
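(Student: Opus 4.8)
The plan is to convert the $\min$ into a bivariate tail probability and then lean on the survival-copula machinery together with the closed-form CDFs already obtained. Starting from \eqref{def-out} I would first use $P_\mathrm{o}=\Pr\!\left(\min(\gamma_\mathrm{R},\gamma_\mathrm{D})\leq\gamma_\mathrm{t}\right)=1-\Pr\!\left(\gamma_\mathrm{R}>\gamma_\mathrm{t},\,\gamma_\mathrm{D}>\gamma_\mathrm{t}\right)$, so that the task reduces to evaluating the joint survival function $\overline{F}_{\gamma_\mathrm{R},\gamma_\mathrm{D}}(\gamma_\mathrm{t},\gamma_\mathrm{t})$ of the first-hop SNR $\gamma_\mathrm{R}=\hat{\gamma}_\mathrm{R}g_\mathrm{SR}$ and the end-to-end SNR $\gamma_\mathrm{D}=\hat{\gamma}_\mathrm{D}g_\mathrm{SR}g_\mathrm{RD}$, both of which carry the dependence of the underlying correlated Nakagami-$m$ links.

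Next I would evaluate that tail through the survival copula of Definition \ref{def-surv}, writing $\overline{F}_{\gamma_\mathrm{R},\gamma_\mathrm{D}}(\gamma_\mathrm{t},\gamma_\mathrm{t})=\hat{C}\!\left(\overline{F}_{\gamma_\mathrm{R}}(\gamma_\mathrm{t}),\overline{F}_{\gamma_\mathrm{D}}(\gamma_\mathrm{t})\right)$, and then exploiting that the FGM copula of Definition \ref{def-fgm} is radially symmetric. Substituting \eqref{fgm} into $\hat{C}(u_1,u_2)=u_1+u_2-1+C(1-u_1,1-u_2)$ and expanding gives $\hat{C}_{\mathcal{F}}(u_1,u_2)=u_1u_2\bigl(1+\theta(1-u_1)(1-u_2)\bigr)=C_{\mathcal{F}}(u_1,u_2)$, so the upper tail is controlled by the very same parameter $\theta$. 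Writing $F=1-\overline{F}$ for each marginal, this yields the compact identity $P_\mathrm{o}=1-\overline{F}_{\gamma_\mathrm{R}}(\gamma_\mathrm{t})\,\overline{F}_{\gamma_\mathrm{D}}(\gamma_\mathrm{t})\bigl(1+\theta F_{\gamma_\mathrm{R}}(\gamma_\mathrm{t})F_{\gamma_\mathrm{D}}(\gamma_\mathrm{t})\bigr)$.

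It then remains only to substitute the two marginals and expand. For the first hop, \eqref{cdf-r} gives $\overline{F}_{\gamma_\mathrm{R}}(\gamma_\mathrm{t})=\Gamma\!\left(m,\tfrac{m}{\hat{\gamma}_\mathrm{R}}\gamma_\mathrm{t}\right)/\Gamma(m)$ and $F_{\gamma_\mathrm{R}}(\gamma_\mathrm{t})=1-\overline{F}_{\gamma_\mathrm{R}}(\gamma_\mathrm{t})$; for the product channel I would take $\overline{F}_{\gamma_\mathrm{D}}(\gamma_\mathrm{t})=1-F_{\gamma_\mathrm{D}}(\gamma_\mathrm{t})$, with the closed-form $F_{\gamma_\mathrm{D}}$ of Theorem \ref{thm-cdf-pdf}, i.e.\ the $\sqrt{2\mathcal{B}}\bigl(\sum_{n}a_n\gamma_\mathrm{t}^{\frac{m+n}{2}}K_{n-m}(\zeta\sqrt{\gamma_\mathrm{t}})+\theta[\cdots]\bigr)$ factor of \eqref{cdf}. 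Plugging these into the compact identity and collecting terms reproduces \eqref{out}; importantly, no new integration has to be carried out, since all the Bessel-type integrals were already resolved in the proof of Theorem \ref{thm-cdf-pdf}.

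The step I expect to be the main obstacle is the passage, in the second paragraph, from the FGM dependence of the link gains $(g_\mathrm{SR},g_\mathrm{RD})$ to the dependence of $(\gamma_\mathrm{R},\gamma_\mathrm{D})$: because the common factor $g_\mathrm{SR}$ enters both SNRs, the pair $(\gamma_\mathrm{R},\gamma_\mathrm{D})$ is \emph{not} a coordinatewise monotone image of $(g_\mathrm{SR},g_\mathrm{RD})$, so copula invariance does not apply verbatim and one has to argue---consistently with the adopted model---that the relevant coupling remains of FGM type with the same $\theta$ before Definition \ref{def-surv} can be invoked. Once that is granted, everything afterwards is lengthy but entirely mechanical bookkeeping of the nested finite sums of $K_\nu(\cdot)$ terms.
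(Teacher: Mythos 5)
Your proposal is correct and follows essentially the same route as the paper: rewrite \eqref{def-out} as $P_\mathrm{o}=1-\Pr\left(\gamma_\mathrm{R}>\gamma_\mathrm{t},\gamma_\mathrm{D}>\gamma_\mathrm{t}\right)$, express this joint tail via the survival copula, use the radial symmetry $\hat{C}_{\mathcal{F}}=C_{\mathcal{F}}$ of the FGM copula to obtain $P_\mathrm{o}=1-\overline{F}_{\gamma_\mathrm{R}}(\gamma_\mathrm{t})\overline{F}_{\gamma_\mathrm{D}}(\gamma_\mathrm{t})\left(1+\theta F_{\gamma_\mathrm{R}}(\gamma_\mathrm{t})F_{\gamma_\mathrm{D}}(\gamma_\mathrm{t})\right)$, and substitute the marginals \eqref{cdf-r} and \eqref{cdf} with no further integration. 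The obstacle you flag---that the copula of $(\gamma_\mathrm{R},\gamma_\mathrm{D})$ need not be FGM with the same $\theta$ because $g_\mathrm{SR}$ enters both SNRs---is real but is not addressed in the paper either, which simply applies the FGM survival copula directly to the pair $(\gamma_\mathrm{R},\gamma_\mathrm{D})$ as a modeling step, so your argument is on the same footing as the published proof.
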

	\begin{proof}
		The outage probability given in \eqref{def-out} can be expressed in terms of the survival copula as follows:
		\begin{align}
			P_\mathrm{o}&=1-\Pr\left(\gamma_\mathrm{R}>\gamma_\mathrm{t},\gamma_\mathrm{D}>\gamma_\mathrm{t}\right)\\
			&=1-\hat{\mathcal{C}}\left(\overline{F}_{\gamma_\mathrm{R}}(\gamma_\mathrm{t}),\overline{F}_{\gamma_\mathrm{D}}(\gamma_\mathrm{t})\right),\label{out-def}
		\end{align}
		where $\overline{F}_{\gamma_\mathrm{R}}(\gamma_\mathrm{t})=1-F_{\gamma_\mathrm{R}}(\gamma_\mathrm{t})$ and $\overline{F}_{\gamma_\mathrm{D}}(\gamma_\mathrm{t})=1-F_{\gamma_\mathrm{D}}(\gamma_\mathrm{t})$ are the survival functions of $\gamma_\mathrm{R}$ and $\gamma_\mathrm{D}$, respectively. Now, using the fact that the FGM survival copula is same as the FGM copula, i.e., $\hat{C}_{\mathcal{F}}(u_1, u_2)=C_{\mathcal{F}}(u_1, u_2)$, inserting \eqref{cdf} and \eqref{cdf-r} into \eqref{out-def}, and doing some simplifications, the proof is completed.
	\end{proof}
\subsection{Asymptotic Analysis}
\textcolor{blue}{Although the analytical expressions provided in \eqref{c-srf}, \eqref{c-rd}, and \eqref{out} fully capture the system performance in terms of the ergodic capacity and the outage probability, we are interested in examining the asymptotic behavior of these metrics in the high SNR regime which can provide deeper insights into the system performance. For this purpose, we allow $\hat\gamma_\mathrm{R}$ to approach infinity; this can be justified when the relay is closer to the source, which is often the case for energy-constrained devices harvesting power from an external source.}

\textcolor{blue}{In the following theorem, we derive the asymptotic expression of the ergodic capacity for the considered model.
\begin{theorem}
	In the high SNR regime, the asymptotic ergodic capacity of the $SR$ link for the considered system model under Nakagami-$m$ fading channel is given by
	\begin{align}
	\overline{\mathcal{C}}_{\mathrm{SR}}^{\infty}=\frac{1}{2\Gamma(m)\ln 2}\left(-\ln\left(\frac{m}{\hat{\gamma}_\mathrm{R}}+\psi(m)\right)\right),\label{csr-asy}
	\end{align}
where $\psi(.)$ is the digamma function.  
\begin{proof}
Assuming the high SNR regime (i.e., $\gamma_{\mathrm{R}}\gg 1$), \eqref{def-cr} can be written as:
		\begin{align}
	\overline{\mathcal{C}}_\mathrm{SR}^\infty&=\frac{1}{2\ln 2}\int_{0}^{\infty}\ln\left(\gamma_\mathrm{R}\right)f_{\gamma_\mathrm{R}}(\gamma_\mathrm{R})d\gamma_\mathrm{R}\label{c-asy-def}\\
	&\overset{(a)}{=}\frac{m^m}{2\hat{\gamma}_\mathrm{R}^m\ln 2}\int_0^\infty \gamma_{\mathrm{R}}^{m-1}\mathrm{e}^{-\frac{m}{\hat{\gamma}_\mathrm{R}}}\ln\left(\gamma_{\mathrm{R}}\right)d\gamma_{\mathrm{R}},\label{c-asy-def2}
\end{align}
where $(a)$ is achieved by inserting \eqref{pdf-r} into \eqref{c-asy-def}. By solving the integral in \eqref{c-asy-def2} with the help of integral format provided in \cite{integral1}, the proof is completed.
\end{proof}
\end{theorem}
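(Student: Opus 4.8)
The plan is to start from the exact integral expression for the ergodic capacity of the $SR$ link, namely $\overline{\mathcal{C}}_\mathrm{SR}=\frac{1}{2\ln 2}\int_{0}^{\infty}\ln(1+\gamma_\mathrm{R})f_{\gamma_\mathrm{R}}(\gamma_\mathrm{R})\,d\gamma_\mathrm{R}$ from \eqref{def-cr}, and to exploit the fact that in the high-SNR regime the bulk of the mass of $\gamma_\mathrm{R}$ sits at large values, so that $\ln(1+\gamma_\mathrm{R})=\ln\gamma_\mathrm{R}+\ln(1+\gamma_\mathrm{R}^{-1})\to\ln\gamma_\mathrm{R}$. This collapses the capacity into a single logarithmic moment, $\overline{\mathcal{C}}_\mathrm{SR}^{\infty}=\frac{1}{2\ln 2}\,\mathbb{E}[\ln\gamma_\mathrm{R}]$, which is far more tractable than the exact Meijer-$G$ expression in \eqref{c-srf}.

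Next I would insert the Nakagami-$m$ (Gamma) PDF of $\gamma_\mathrm{R}$ from \eqref{pdf-r}, reducing the problem to evaluating $\int_{0}^{\infty}\gamma_\mathrm{R}^{m-1}\mathrm{e}^{-(m/\hat{\gamma}_\mathrm{R})\gamma_\mathrm{R}}\ln\gamma_\mathrm{R}\,d\gamma_\mathrm{R}$. This is a classical table integral of the form $\int_{0}^{\infty}x^{\mu-1}\mathrm{e}^{-\beta x}\ln x\,dx=\frac{\Gamma(\mu)}{\beta^{\mu}}\bigl(\psi(\mu)-\ln\beta\bigr)$, valid for any $\mathrm{Re}(\mu)>0$ and hence covering the whole admissible range $m\geq 0.5$, not merely integer $m$. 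Substituting $\mu=m$ and $\beta=m/\hat{\gamma}_\mathrm{R}$ and multiplying by the normalizing constant $m^{m}/(\hat{\gamma}_\mathrm{R}^{m}\Gamma(m))$, the $\Gamma(m)$ factor and the $(\hat{\gamma}_\mathrm{R}/m)^{m}$ power cancel, leaving the compact closed form $\overline{\mathcal{C}}_\mathrm{SR}^{\infty}=\frac{1}{2\ln 2}\bigl(\psi(m)+\ln(\hat{\gamma}_\mathrm{R}/m)\bigr)$, i.e.\ the expression in \eqref{csr-asy} once the constants are collected; note in particular that the prefactor must cancel the $\Gamma(m)$ produced by the table integral, so careful bookkeeping here is what pins down the final multiplicative constant.

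The main obstacle is not the integral, which is routine, but rather justifying the interchange implicit in replacing $\ln(1+\gamma_\mathrm{R})$ by $\ln\gamma_\mathrm{R}$. To make this rigorous I would bound the residual $0\le\ln(1+\gamma_\mathrm{R})-\ln\gamma_\mathrm{R}=\ln(1+\gamma_\mathrm{R}^{-1})\le\gamma_\mathrm{R}^{-1}$ on $\gamma_\mathrm{R}\ge 1$, treat the bounded contribution on $[0,1]$ separately, and observe that $\mathbb{E}[\gamma_\mathrm{R}^{-1}]$ is finite exactly when $m>1$; for $m\le 1$ one instead argues via uniform integrability / dominated convergence in the scaling $\hat{\gamma}_\mathrm{R}\to\infty$, since the error term vanishes relative to the diverging $\ln\hat{\gamma}_\mathrm{R}$ term. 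With that justification in hand the remainder is pure algebra, and the digamma-form result follows immediately from the table integral quoted above.
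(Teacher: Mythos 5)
Your proposal is correct and follows essentially the same route as the paper: replace $\ln(1+\gamma_\mathrm{R})$ by $\ln\gamma_\mathrm{R}$ in the high-SNR regime, insert the Gamma PDF of $\gamma_\mathrm{R}$, and evaluate the resulting logarithmic moment via the table integral $\int_0^\infty x^{\mu-1}\mathrm{e}^{-\beta x}\ln x\,dx=\Gamma(\mu)\beta^{-\mu}\left(\psi(\mu)-\ln\beta\right)$; your justification of the $\ln(1+\gamma_\mathrm{R})\to\ln\gamma_\mathrm{R}$ step is an added rigor the paper does not bother with. One remark: your bookkeeping is in fact tighter than the printed theorem — since the $\Gamma(m)$ from the table integral cancels the PDF normalization, the result is $\overline{\mathcal{C}}_\mathrm{SR}^{\infty}=\frac{1}{2\ln 2}\bigl(\psi(m)-\ln\tfrac{m}{\hat{\gamma}_\mathrm{R}}\bigr)$ as you obtain, valid for any real $m\geq 0.5$, whereas the theorem as stated retains a spurious $1/\Gamma(m)$ prefactor and misplaces a parenthesis (it reads $-\ln\bigl(\tfrac{m}{\hat{\gamma}_\mathrm{R}}+\psi(m)\bigr)$ instead of $-\ln\bigl(\tfrac{m}{\hat{\gamma}_\mathrm{R}}\bigr)+\psi(m)$), so your expression is the correct reading of the same computation.
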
}

\textcolor{blue}{The asymptotic outage probability for the proposed model is given by the following theorem.
\begin{theorem}
In the high SNR regime, the asymptotic outage probability for the considered dual-hop SWIPT relay network over dependent Nakagami-$m$ fading channels is given by \eqref{out-asy}
\begin{align}
		P_\mathrm{o}^{\infty}\approx 1-\overline{F}^{\infty}_{\gamma_\mathrm{R}}(\gamma_\mathrm{t})\overline{F}_{\gamma_\mathrm{D}}(\gamma_\mathrm{t})\big(1+\theta {F}^{\infty}_{\gamma_\mathrm{R}}(\gamma_\mathrm{t}) {F}_{\gamma_\mathrm{D}}(\gamma_\mathrm{t})\big)\label{out-asy},
\end{align}
where 
\begin{align}\label{approx-cdf}
F^{\infty}_{\gamma_\mathrm{R}}(\gamma_\mathrm{R})\approx\frac{m^m\gamma_\mathrm{R}^m}{\hat{\gamma}_\mathrm{R}^m\Gamma(m+1)},
\end{align}
and $\overline{F}^{\infty}_{\gamma_\mathrm{R}}(\gamma_\mathrm{R})=1- F^{\infty}_{\gamma_\mathrm{R}}(\gamma_\mathrm{R})$
\begin{proof}
The CDF of $\gamma_\mathrm{R}$ at the high SNR regime can be approximated by \eqref{approx-cdf} as in \cite{Wang2003}. Now, by exploiting the FGM copula definition from \eqref{fgm}, the outage probability in \eqref{out-def} can be re-written as:
	\begin{align}
		P_\mathrm{o}=1-\overline{F}_{\gamma_\mathrm{R}}(\gamma_\mathrm{t})\overline{F}_{\gamma_\mathrm{D}}(\gamma_\mathrm{t})\big(1+\theta {F}_{\gamma_\mathrm{R}}(\gamma_\mathrm{t}) {F}_{\gamma_\mathrm{D}}(\gamma_\mathrm{t})\big)\label{asy-out1}.
	\end{align}
Then, by inserting the CDF of $\gamma_{\mathrm{R}}$ from \eqref{cdf} and the approximate CDF of $\gamma_{\mathrm{D}}$ from \eqref{approx-cdf} into \eqref{asy-out1}, the asymptotic outage probability can be obtained as \eqref{out-asy} and the proof is completed. 
	\end{proof}
\end{theorem}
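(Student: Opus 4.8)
The plan is to start from the exact outage-probability representation obtained en route to \eqref{out}, namely \eqref{out-def}, $P_\mathrm{o}=1-\hat{\mathcal{C}}\bigl(\overline{F}_{\gamma_\mathrm{R}}(\gamma_\mathrm{t}),\overline{F}_{\gamma_\mathrm{D}}(\gamma_\mathrm{t})\bigr)$, and to keep the dependence structure intact while replacing only the $SR$ marginal by its high-SNR surrogate. First I would invoke the self-duality of the FGM copula, $\hat{C}_{\mathcal{F}}(u_1,u_2)=C_{\mathcal{F}}(u_1,u_2)$, already used for the exact result, and evaluate \eqref{fgm} at $u_1=\overline{F}_{\gamma_\mathrm{R}}(\gamma_\mathrm{t})$, $u_2=\overline{F}_{\gamma_\mathrm{D}}(\gamma_\mathrm{t})$, using $1-u_j=F_{\gamma_j}(\gamma_\mathrm{t})$, to obtain
\[
P_\mathrm{o}=1-\overline{F}_{\gamma_\mathrm{R}}(\gamma_\mathrm{t})\,\overline{F}_{\gamma_\mathrm{D}}(\gamma_\mathrm{t})\bigl(1+\theta F_{\gamma_\mathrm{R}}(\gamma_\mathrm{t})F_{\gamma_\mathrm{D}}(\gamma_\mathrm{t})\bigr),
\]
which is exactly \eqref{asy-out1} and holds for every $\theta\in[-1,1]$ without approximation.

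The second step is the asymptotic treatment of the $SR$ link, which is the only place the high-SNR assumption enters: since $\hat{\gamma}_\mathrm{R}\to\infty$ (physically, the relay sits close to the energy source it harvests from, whereas $\hat{\gamma}_\mathrm{D}$ stays finite), the argument $\tfrac{m}{\hat{\gamma}_\mathrm{R}}\gamma_\mathrm{R}$ of the upper incomplete Gamma function in \eqref{cdf-r} tends to zero. I would expand $\Gamma\!\left(m,\tfrac{m}{\hat{\gamma}_\mathrm{R}}\gamma_\mathrm{R}\right)$ about the origin and retain only the leading term, giving $F_{\gamma_\mathrm{R}}(\gamma_\mathrm{R})=1-\Gamma\!\left(m,\tfrac{m}{\hat{\gamma}_\mathrm{R}}\gamma_\mathrm{R}\right)/\Gamma(m)\sim \dfrac{m^m\gamma_\mathrm{R}^m}{\hat{\gamma}_\mathrm{R}^m\,\Gamma(m+1)}$, i.e.\ \eqref{approx-cdf}; this is the standard first-order expansion of a Gamma CDF near the origin \cite{Wang2003}, and it is valid for arbitrary real $m\ge 1/2$, not just integer $m$. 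I would then define $\overline{F}^{\infty}_{\gamma_\mathrm{R}}=1-F^{\infty}_{\gamma_\mathrm{R}}$ and substitute $F_{\gamma_\mathrm{R}}\mapsto F^{\infty}_{\gamma_\mathrm{R}}$, $\overline{F}_{\gamma_\mathrm{R}}\mapsto\overline{F}^{\infty}_{\gamma_\mathrm{R}}$ into \eqref{asy-out1}, leaving $F_{\gamma_\mathrm{D}}$ (given exactly by Theorem~\ref{thm-cdf-pdf}, \eqref{cdf}) untouched; this produces \eqref{out-asy}.

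The main obstacle is not algebraic but one of justification. One must argue that truncating the incomplete-Gamma expansion at its leading power $\gamma_\mathrm{R}^m$ is legitimate \emph{inside} the product that carries the $\theta$-term --- i.e.\ that the discarded $O(\gamma_\mathrm{R}^{m+1})$ contributions are genuinely subdominant as $\hat{\gamma}_\mathrm{R}\to\infty$ for the fixed threshold $\gamma_\mathrm{t}$ and every $\theta\in[-1,1]$ --- and therefore to state the conclusion as an \emph{asymptotic equivalent} (hence the ``$\approx$'') rather than an identity. A secondary point worth checking is that the bracket $1+\theta F^{\infty}_{\gamma_\mathrm{R}}(\gamma_\mathrm{t})F_{\gamma_\mathrm{D}}(\gamma_\mathrm{t})$ stays nonnegative, so that $P_\mathrm{o}^{\infty}$ remains a valid probability; this follows from $|\theta|\le 1$ together with $F^{\infty}_{\gamma_\mathrm{R}},F_{\gamma_\mathrm{D}}\in[0,1]$ in the regime of interest.
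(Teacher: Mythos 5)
Your proposal follows essentially the same route as the paper: rewrite \eqref{out-def} via the FGM survival-copula identity $\hat{C}_{\mathcal{F}}=C_{\mathcal{F}}$ to get \eqref{asy-out1}, then replace the $SR$-link CDF by its leading-order expansion \eqref{approx-cdf} as $\hat{\gamma}_\mathrm{R}\to\infty$ while keeping $F_{\gamma_\mathrm{D}}$ from \eqref{cdf} exact. Your added remarks (deriving \eqref{approx-cdf} from the incomplete-Gamma expansion, justifying the truncation, and correctly assigning the approximation to $\gamma_\mathrm{R}$ rather than $\gamma_\mathrm{D}$ as the paper's proof text loosely states) only strengthen the argument.
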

	\begin{remark}
In view of \eqref{csr-asy}, similar to the exact ergodic capacity derived for the SR link in \eqref{c-srf}, we can see that the correlation does not affect the ergodic capacity of the SR link at the high SNR regime for a fixed transmit power. However, regarding the impact of correlation on the ergodic capacity of the RD link in \eqref{c-rd} and the ergodic capacity definition in \eqref{c-rd}, a positive (negative) dependence increases (decreases) the ergodic capacity of the proposed system model in the high SNR regime, by virtue of Corollary 2.
		\end{remark}
\begin{remark}
Given \eqref{out-asy}, similar to the exact outage probability obtained in \eqref{out}, we state that the positive (negative) dependence provides a lower (higher) outage probability in the high SNR regime with a fixed transmit power. In addition, the asymptotic outage probability in \eqref{out-asy} highly depends on the fading parameter $m$.
\end{remark}}

	\begin{figure}[!h]\vspace{0ex}
		\centering
		\includegraphics[width=1\columnwidth]{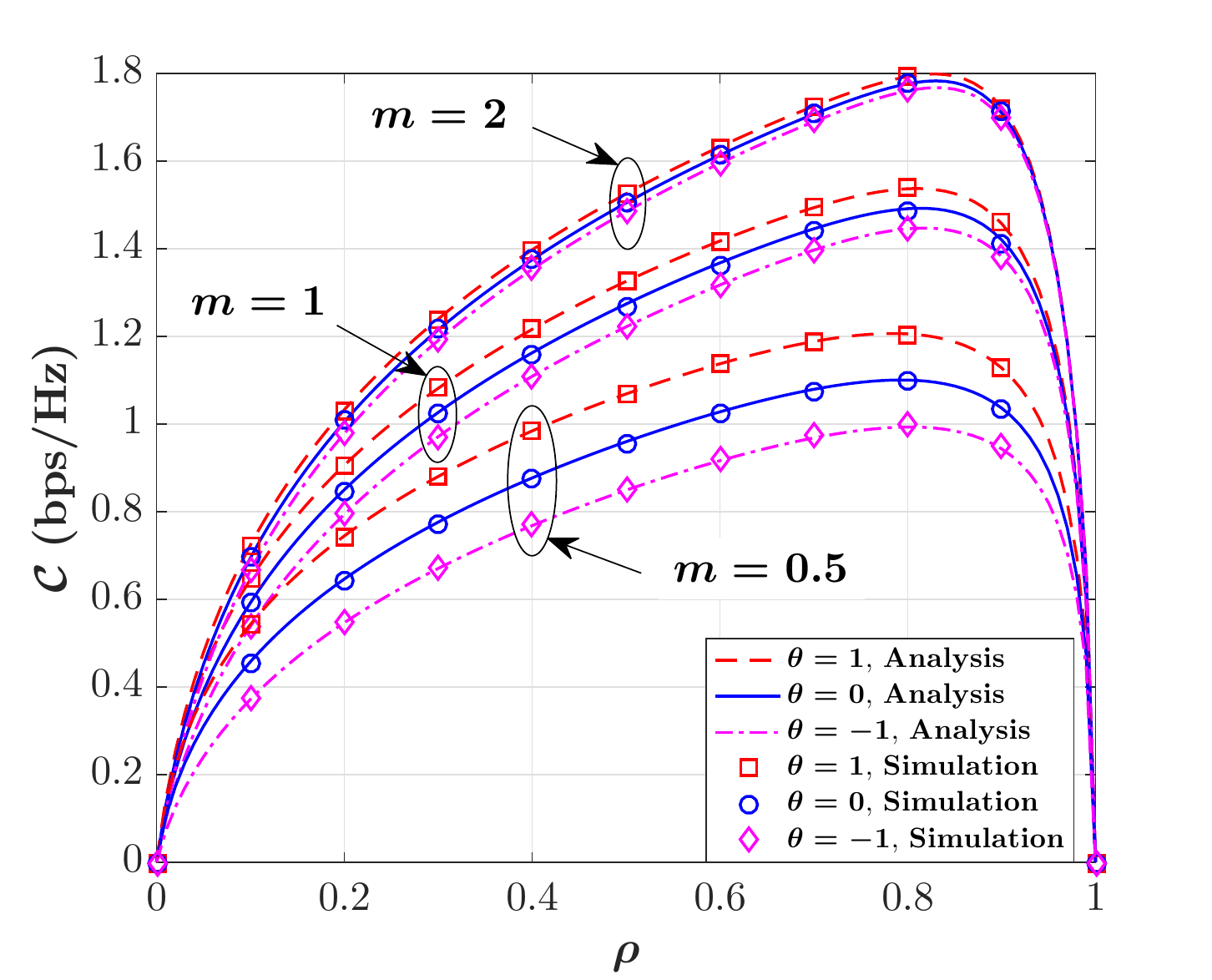} 
		\caption{Ergodic capacity versus PSR factor $\rho$ under dependent/independent Nakagami-$m$ fading channels when  $\kappa=0.7$, $P_\mathrm{S}=10$W, $N=10^{-2}$W, $d_\mathrm{SR}=d_\mathrm{RD}=2$m, and $\alpha=2.5$.} 
		\label{fig-crhom}
	\end{figure}
	\begin{figure}[!h]\vspace{0ex}
		\centering
		\includegraphics[width=1\columnwidth]{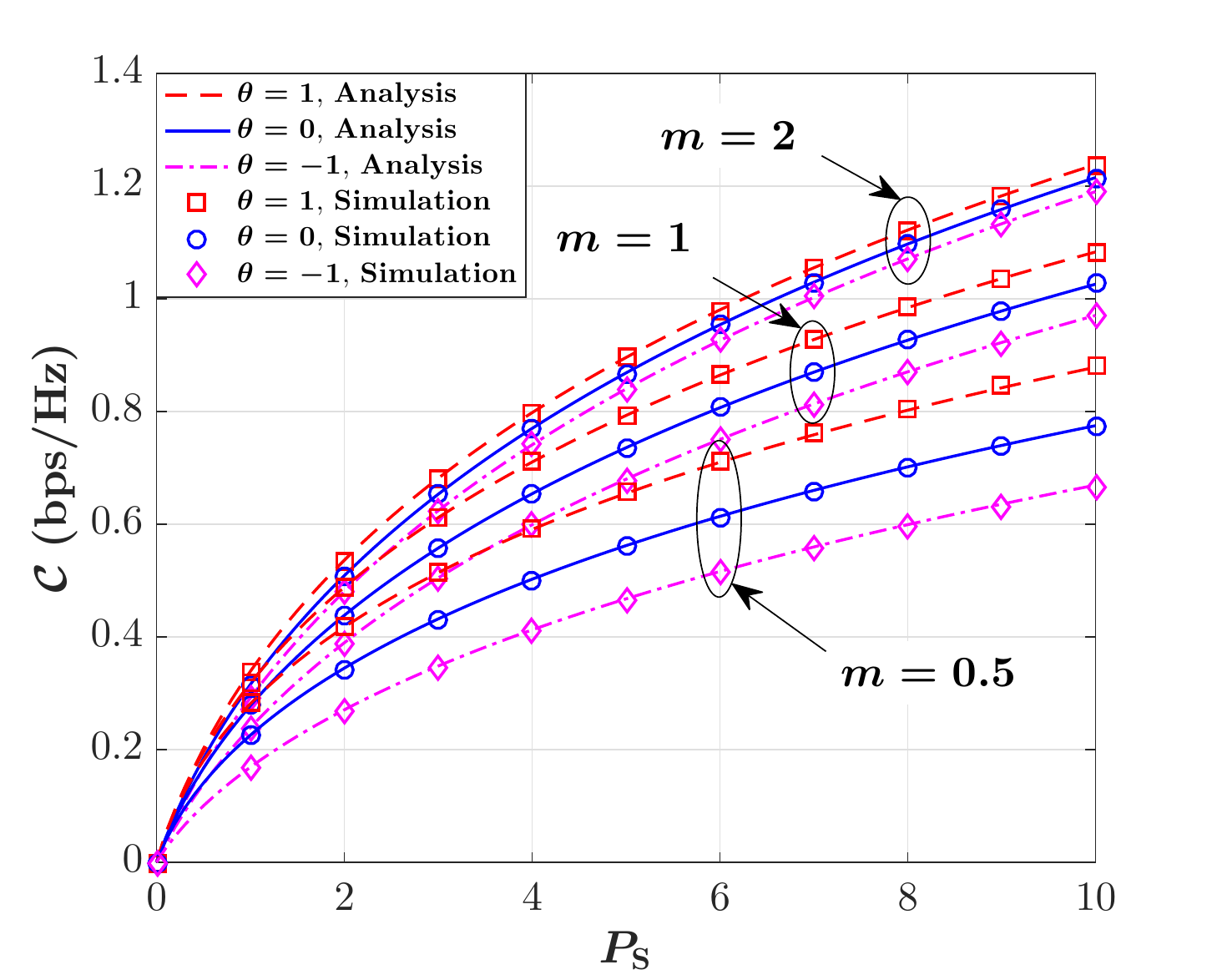} 
		\caption{Ergodic capacity versus source power $P_\mathrm{S}$ under dependent/independent Nakagami-$m$ fading channels when $\kappa=0.7$, $\rho=0.3$, $N=10^{-2}$W, $d_\mathrm{SR}=d_\mathrm{RD}=2$m, and $\alpha=2.5$.} 
		\label{fig-cpsm}
	\end{figure}
	\begin{figure}[!h]\vspace{0ex}
		\centering
		\includegraphics[width=1\columnwidth]{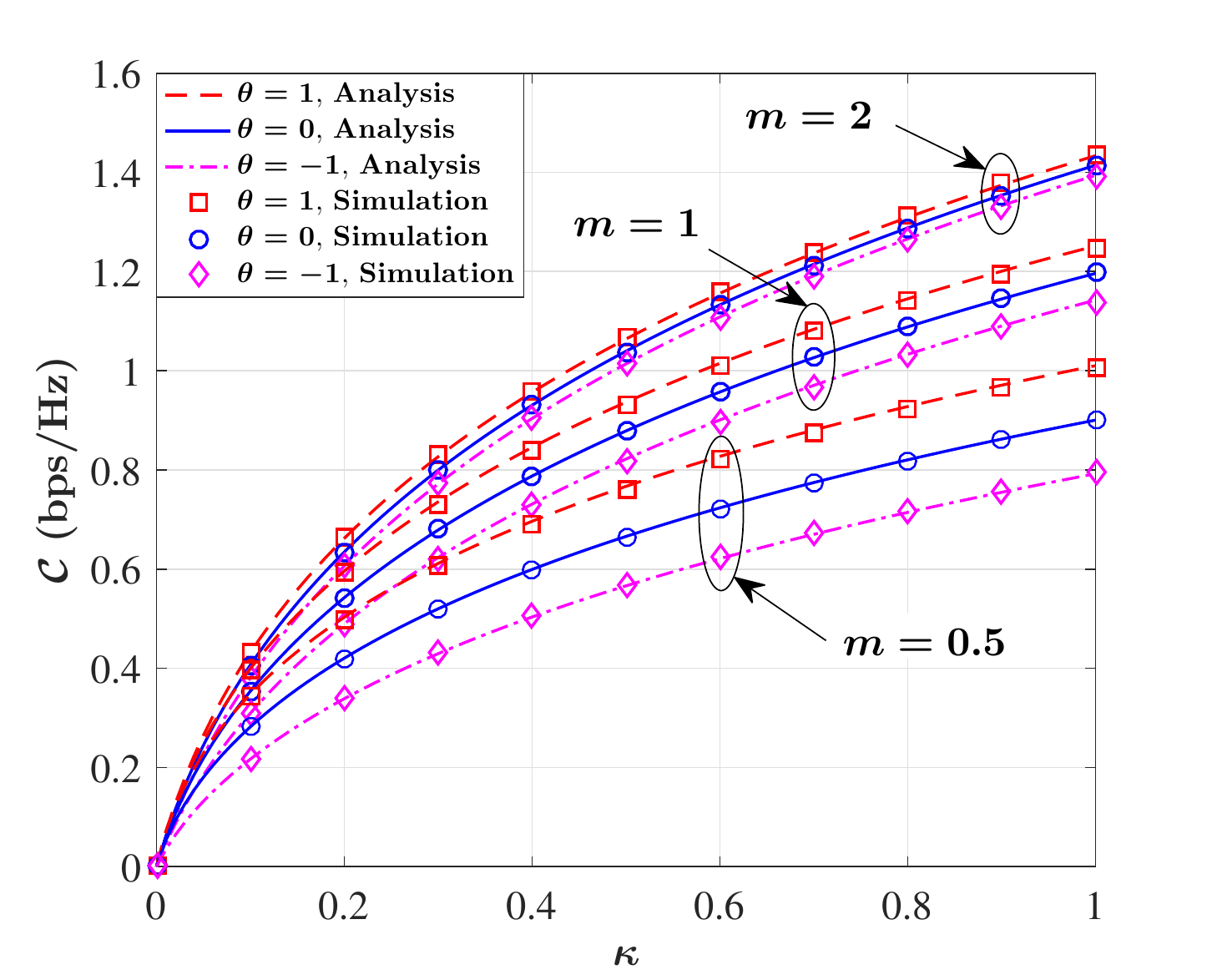} 
		\caption{Ergodic capacity versus EH efficiency $\kappa$ under dependent/independent Nakagami-$m$ fading channels when $\rho=0.3$, $P_\mathrm{S}=10$W, $N=10^{-2}$W, $d_\mathrm{SR}=d_\mathrm{RD}=2$m, and $\alpha=2.5$.} 
		\label{fig-ckappam}
	\end{figure}
\begin{figure}[!h]\vspace{0ex}
	\centering
	\includegraphics[width=1\columnwidth]{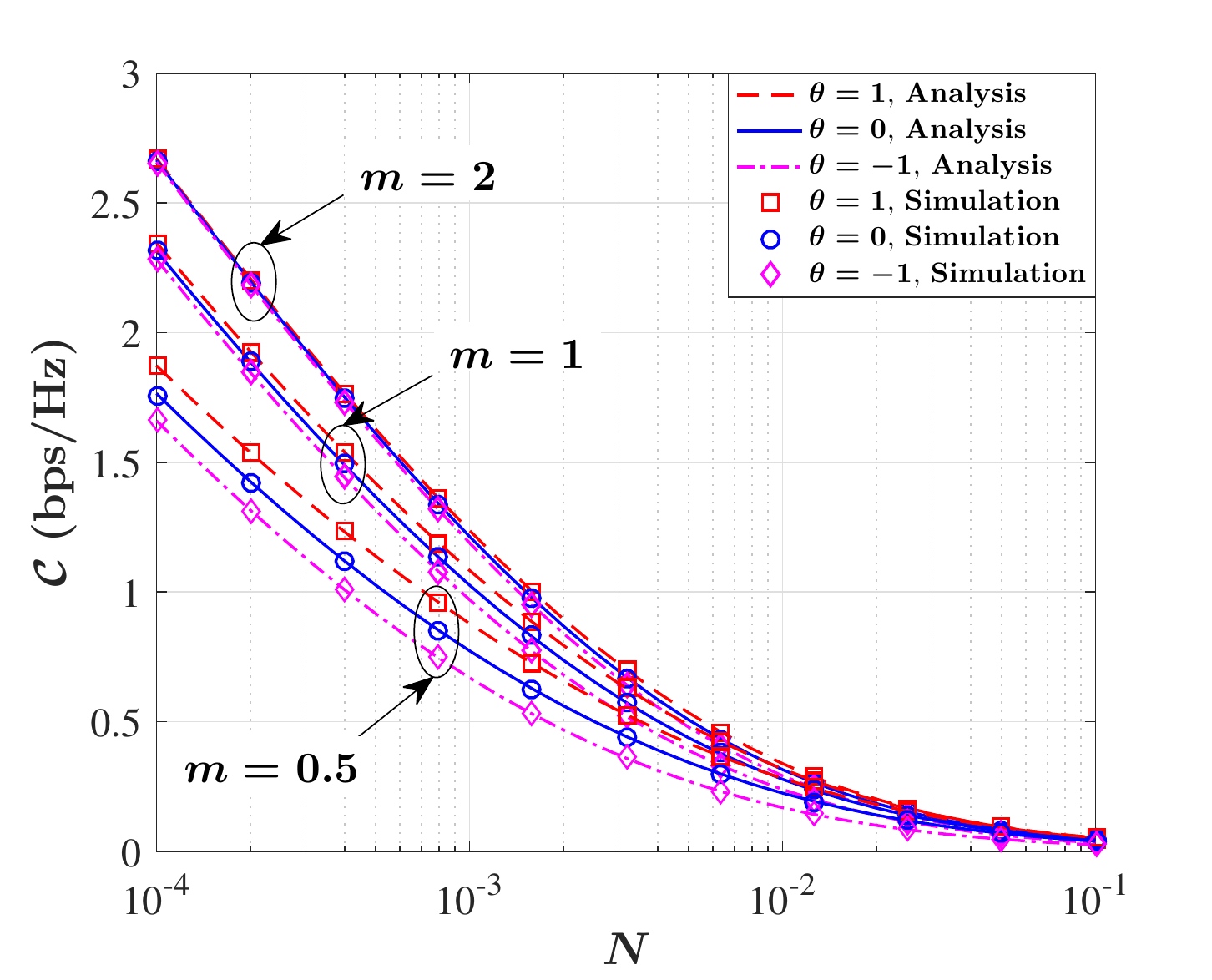} 
	\caption{Ergodic capacity versus noise power $N$ under dependent/independent Nakagami-$m$ fading channels when $\kappa=0.7$, $\rho=0.3$, $P_\mathrm{S}=1$W, $d_\mathrm{RD}=d_\mathrm{SR}=2$m, and $\alpha=2.5$.} 
	\label{fig-cn}
\end{figure}
\begin{figure}[!h]\vspace{0ex}
	\centering
	\includegraphics[width=1\columnwidth]{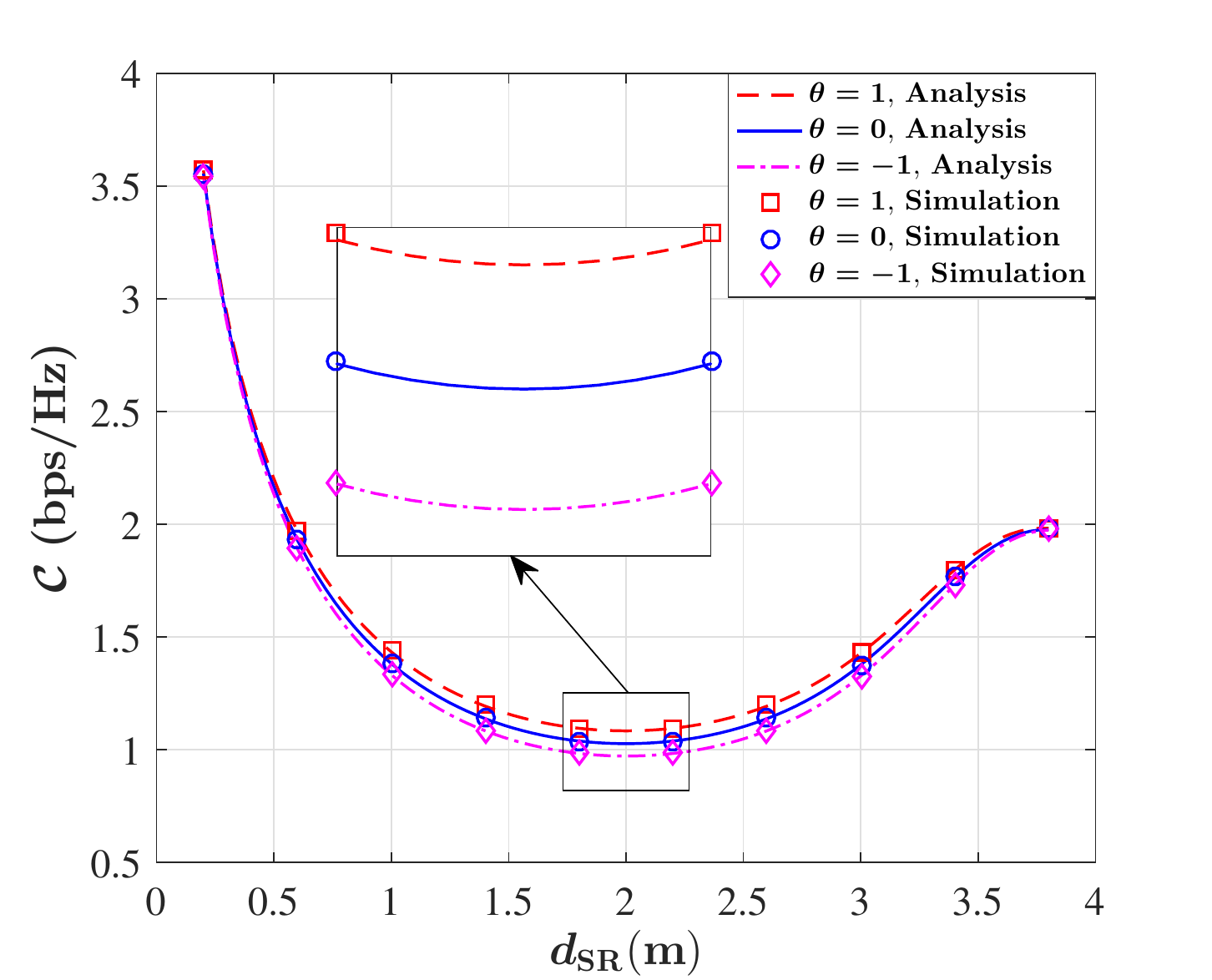} 
	\caption{Ergodic capacity versus $SR$ distance $d_{SR}$ under dependent/independent Nakagami-$m$ fading channels when $\rho=0.3$, $m=1$, $P_\mathrm{S}=10$W, $N=10^{-2}$W, $d_\mathrm{RD}=4-d_\mathrm{SR}$, and $\alpha=2.5$.} 
	\label{fig-cdsr}
\end{figure}
	\begin{figure}[!h]\vspace{0ex}
		\centering
		\includegraphics[width=1\columnwidth]{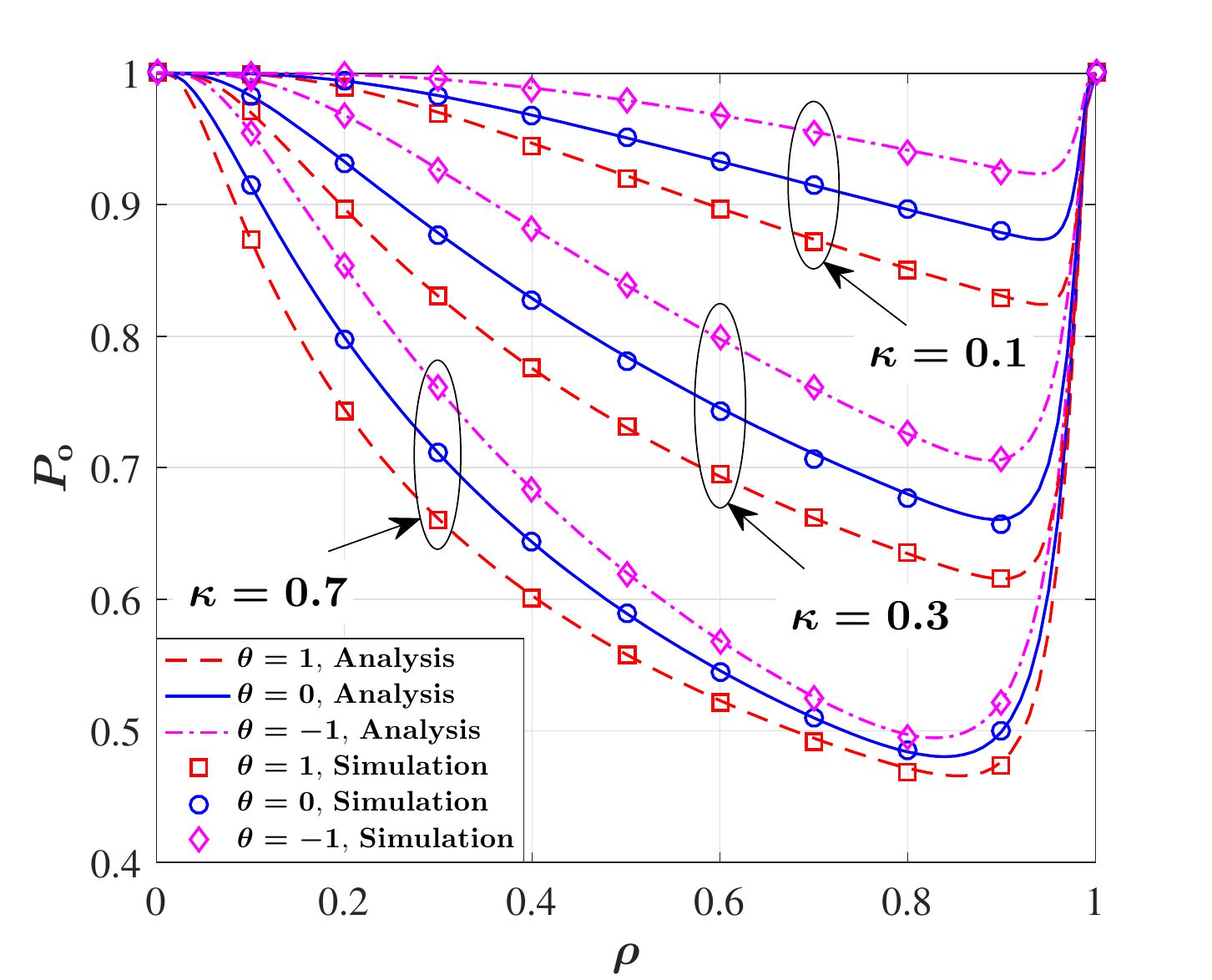} 
		\caption{Outage probability versus PSR factor $\rho$ under dependent/independent Nakagami-$m$ fading channels when $m=1$, $P_\mathrm{S}=10$W, $N=10^{-3}$W, $d_\mathrm{SR}=d_\mathrm{RD}=2$m, $\gamma_\mathrm{t}=0$dB, and $\alpha=2.5$.} 
		\label{fig-outrhokappa2}
	\end{figure}
	\begin{figure}[!h]\vspace{0ex}
		\centering
		\includegraphics[width=1\columnwidth]{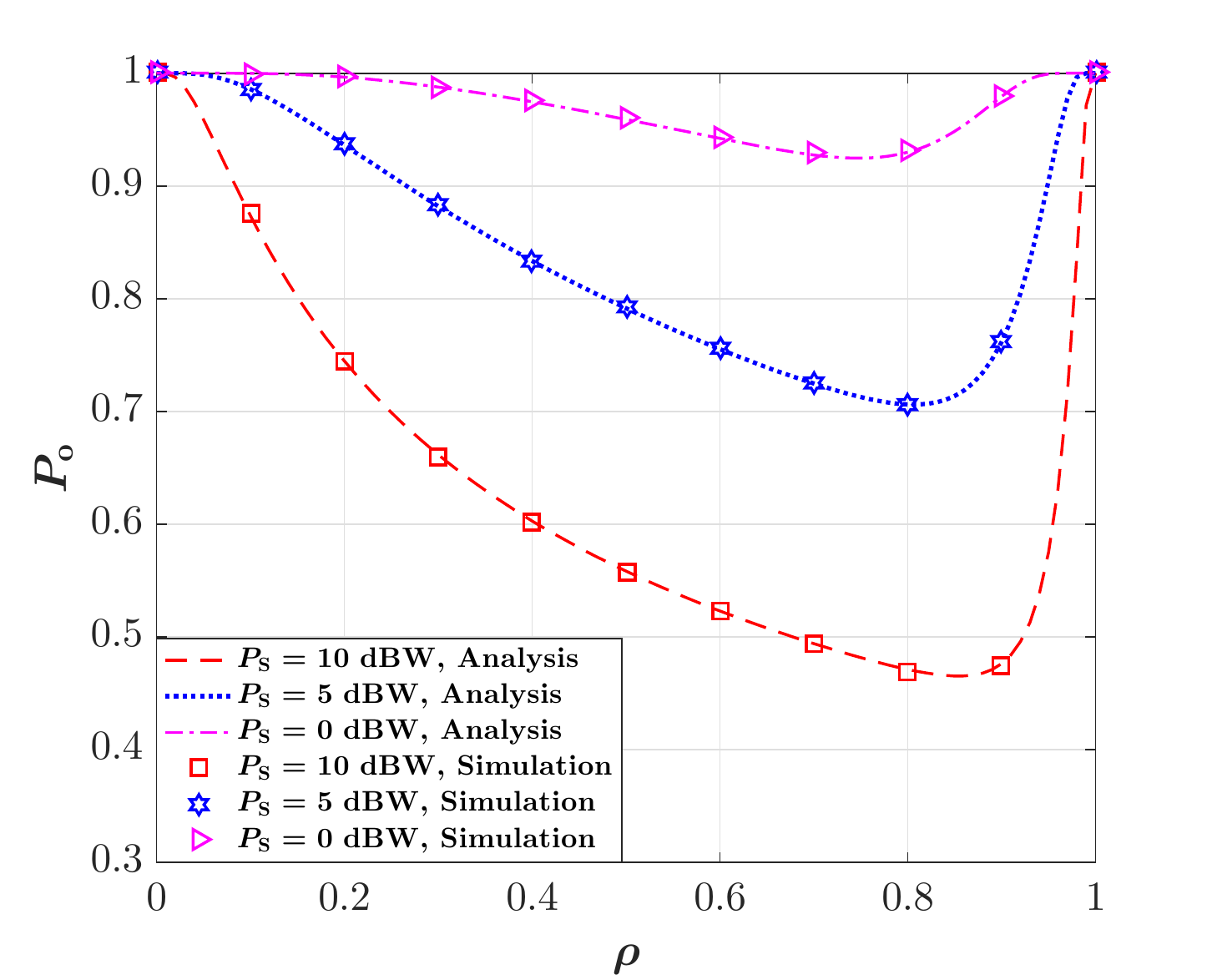} 
		\caption{Outage probability versus PSR factor $\rho$ under dependent Nakagami-$m$ fading channels with positive structure when $\kappa=0.7$, $m=1$, $\gamma_\mathrm{t}=0$dB, $N=10^{-3}$W, $d_\mathrm{SR}=d_\mathrm{RD}=2$m, and $\alpha=2.5$.} 
		\label{fig-outrhops}
	\end{figure}
 	\begin{figure}[!h]\vspace{0ex}
		\centering
		\includegraphics[width=1\columnwidth]{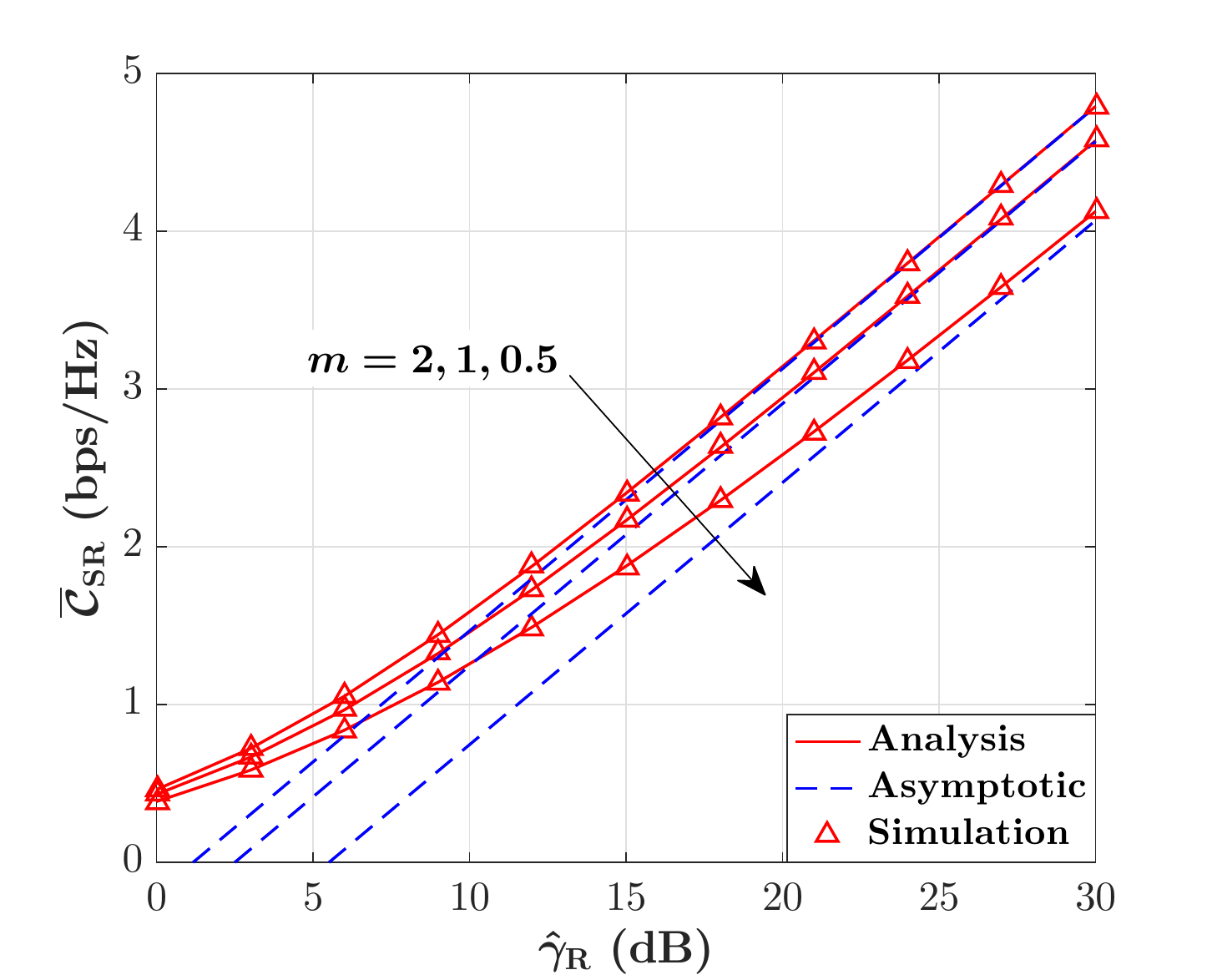} 
		\caption{\textcolor{blue}{Average capacity of the SR link versus $\hat{\gamma}_\mathrm{R}$ for different values of fading parameter $m$, when $P_\mathrm{S}=10$W, $N=10^{-3}$W, $d_\mathrm{SR}=d_\mathrm{RD}=2$m, $\kappa=0.7$, $\rho=0.3$, and $\alpha=2.5$.}} 
		\label{fig-asym-csr}
	\end{figure}
 	\begin{figure}[!h]\vspace{0ex}
		\centering
		\includegraphics[width=1\columnwidth]{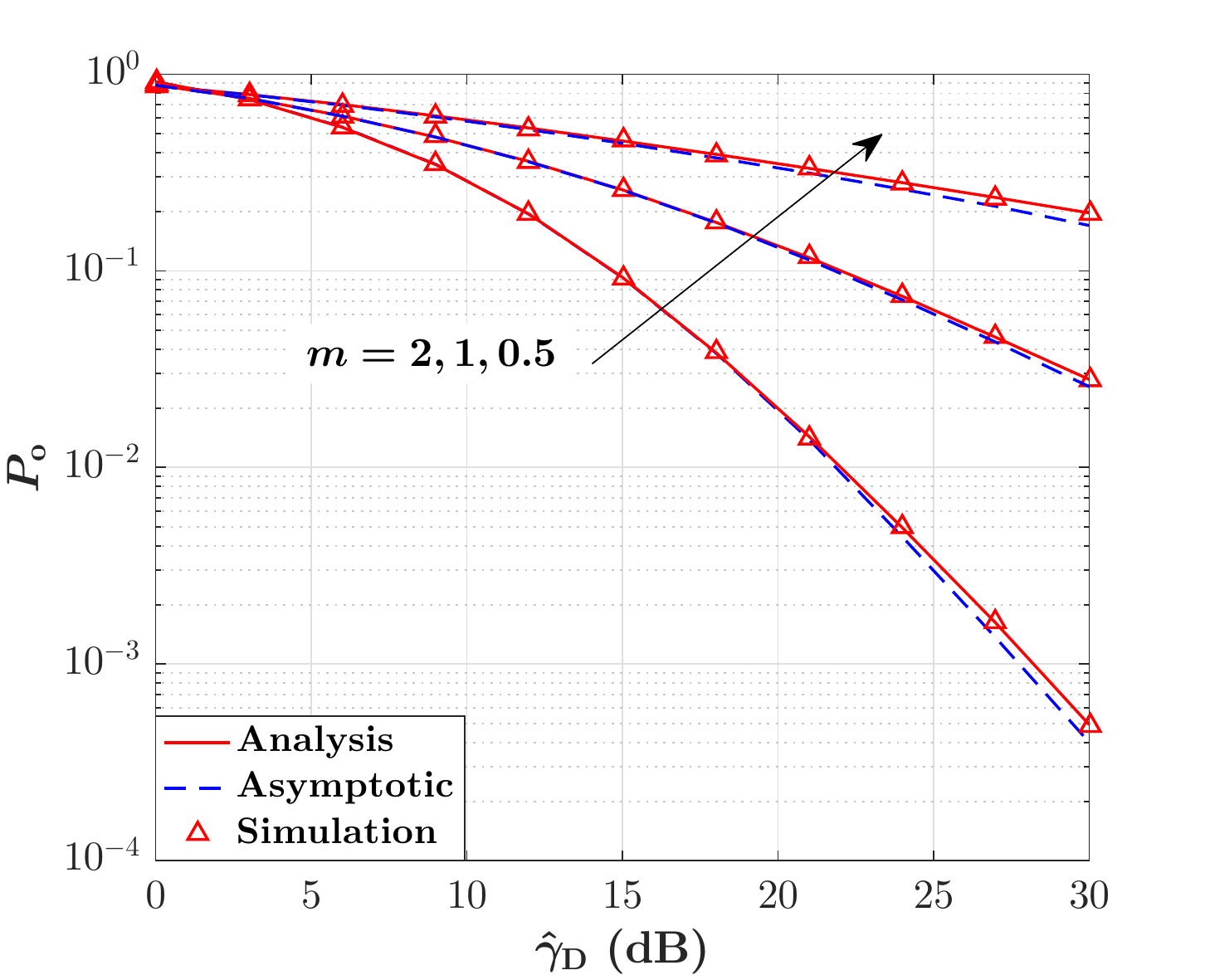} 
		\caption{\textcolor{blue}{Outage probability versus ${\hat{\gamma}}_\mathrm{D}$ under positive correlation for different values of fading parameter $m$, when $P_\mathrm{S}=10$W, $N=10^{-3}$W, $d_\mathrm{SR}=d_\mathrm{RD}=2$m, $\kappa=0.7$, $\rho=0.3$, $\alpha=2.5$, $\gamma_\mathrm{t}=0$dB, and $\theta=1$.}} 
		\label{fig-asym-out}
	\end{figure}
	
	
	\section{NUMERICAL RESULTS}\label{sec-results}
	In this section, we evaluate the theoretical expressions previously derived, which are double-checked in all instances with Monte-Carlo (MC) simulations\footnote{\textcolor{blue}{Basic simulation of copula-based dependence between RVs is natively implemented in software mathematical packages such as Matlab, and more sophisticated functionalities are available through recently developed packages \cite{Coblenz2021}. Besides, such dependence can also be easily implemented using state-of-the-art methods of transformation of RVs \cite{Joe2014}}.}.
	
	Fig. \ref{fig-crhom} illustrates the performance of ergodic capacity under correlated Nakagami-$m$ fading based on the variation of PSR factor $\rho$ for selected values of fading parameter $m$ and dependence parameter $\theta$. In both independent and correlated fading conditions, we can see that the capacity performance is improved as $\rho$ increases from $0$ to an optimal value and it is weakened as $\rho$ grows from the optimal value to $1$. The reason for this behavior is that as $\rho$ increases from $0$ to its optimal value, more power is allocated to the EH process, and thus, the relay node $R$ can transmit information with a higher power, which improves the capacity performance. However, as $\rho$ grows from its optimal amount to $1$, more power is dedicated to the EH process and less power remains for $SR$ information processing, so, the capacity value decreases. Furthermore, it can be seen that correlated fading provides better performance as compared with negative dependence structure as well as the independent case in terms of the ergodic capacity. We can also observe that as the fading severity reduces (i.e., $m$ increases), the ergodic capacity performance ameliorates but the fading correlation effects are gradually eliminated. The behavior of the ergodic capacity in terms of the source power $P_\mathrm{S}$ for the selected values of the fading and dependence parameters is shown in Fig. \ref{fig-cpsm}. We can see that under fixed values of the PSR factor $\rho$ and EH efficiency $\kappa$, the ergodic capacity performance improves as $P_\mathrm{S}$ and $m$ increase, as expected. Fig. \ref{fig-ckappam} represents the ergodic capacity performance with respect to the EH efficiency $\kappa$ for selected values of the fading and dependence parameters. It can be observed that a larger ergodic capacity is achieved when $\kappa$ tends to $1$ since an increment in EH efficiency leads more energy reach to the harvester in each slot. The ergodic capacity performance based on the variations of the noise power $N$ for given values of the fading and dependence parameters is illustrated in Fig. \ref{fig-cn}, where lower values of the ergodic capacity achieve as the noise power decreases. As expected, the ergodic capacity performance also improves as $m$ increases under both dependent and independent structures. Fig. \ref{fig-cdsr} indicates the behavior of the ergodic capacity in terms of the $SR$ distance $d_\mathrm{SR}$ under different values of the dependence parameter. In this case, the $RD$ distance $d_\mathrm{RD}$ is set to $d_\mathrm{RD}=4-d_\mathrm{SR}$, and other parameters are kept fixed. For $d_\mathrm{SR}<2$, it can be observed that the ergodic capacity decreases as the distance between the source node and the relay node increases. The reason is that by increasing $d_\mathrm{SR}$, both energy harvested and the received signal strength at the relay node decrease due to the larger path-loss $d_\mathrm{SR}^\alpha$. Therefore, the received signal strength at the destination node is poor and the ergodic capacity decreases. However, we can see that the ergodic capacity slightly increases by increasing $d_\mathrm{SR}$ beyond $2$. This is because as the relay node gets closer to the destination ($d_\mathrm{RD}<2$), even lesser values of harvested energy, the harvested energy is enough for reliable communication between the relay and the destination nodes due to smaller values of the $RD$ path-loss $d_\mathrm{RD}^{\alpha}$.
	
	Figs. \ref{fig-outrhokappa2} and \ref{fig-outrhops} show the behavior of the outage probability in terms of $\rho$ under correlated Nakagami-$m$ fading channels for different values of the EH efficiency $\kappa$, dependence parameter $\theta$, and source power $P_\mathrm{S}$. With the same argument adopted in analyzing Fig. \ref{fig-crhom}, we observe that there is a trade-off between energy harvesting and information transmission in terms of the outage probability, such that the minimum outage probability is achieved for an optimal value of $\rho$ under both correlated and independent fading scenarios. Furthermore, it is worth noting that the outage probability performance improves as the EH efficiency $\kappa$ grows since the harvester gains more energy in each available slot. We can also see that the outage probability improvement is increased as $P_\mathrm{S}$ rises under all dependence structures. From correlation viewpoint, it can be realized from both Figs. \ref{fig-outrhokappa2} and \ref{fig-outrhops} that the correlated fading provides a smaller outage probability under positive dependence structure compared with negative correlation and independent case.

   \textcolor{blue}{Fig. \ref{fig-asym-csr} shows the exact and asymptotic behavior of the average capacity for the SR link in terms of $\hat{\gamma}_\mathrm{R}$ . It can be seen that as $\hat{\gamma}_\mathrm{R}$ grows, a larger capacity is provided for the SR link. Furthermore, it becomes evident that the capacity highly depends on the fading parameter in the high SNR regime (see \eqref{csr-asy}) so that as $m$ increases the capacity performance improves. In Fig. \ref{fig-asym-out}, the exact and asymptotic performance of the outage probability in terms of the average SNR ${\hat{\gamma}_\mathrm{D}}$ under positive correlation is represented. It is confirmed that the asymptotic analysis reaches the exact analysis with high accuracy. In addition, we can see that as $\hat{\gamma}_\mathrm{D}$ increases, the outage probability performance improves since the channel conditions become better. Furthermore, we can also observe that in both exact and asymptotic analysis, the outage probability performance highly depends on the fading parameter, namely, lower outage probability is achieved as $m$ grows.}
   
	\section{Conclusion}\label{sec-conclusion}
	In this paper, we analyzed the effect of fading correlation on the performance of SWIPT relay networks, where the power splitting relaying protocol is used for the energy harvesting process. To this end, we first provided general analytical expressions of the 
	CDF for the product of two arbitrary dependent random variables. Then, we obtained the closed-form expressions for the ergodic capacity and outage probability under correlated Nakagami-$m$ fading channels, using FGM copula. \textcolor{blue}{Our theoretical results, confirmed by simulations, show that a positive dependence between the SR and RD links has a constructive effect on the system performance, while the converse conclusion is obtained for the case of negative dependence. These effects become more important as fading severity grows.}

	\appendices
	\section{Proof of Theorem \ref{thm-product-cdf-pdf}}\label{app-thm-product-cdf-pdf}
	By assuming $Y_1=X_1X_2$ and $Y_2=X_1$, and exploiting the PDF of $Y_1$ determined in \cite[Thm. 3]{ghadi2022capacity} as:
	\begin{align}
		f_{Y_1}(y_1)
		=\int_{0}^{1}c\left(u,F_{\frac{Y_1}{Y_2}}\left(\tfrac{y_1}{F_{Y_2}^{-1}(u)}\right)\right)
		\tfrac{f_{\frac{Y_1}{Y_2}}\left(\tfrac{y_1}{F_{Y_2}^{-1}(u)}\right)}{|F_{Y_2}^{-1}(u)|} du,
	\end{align}
	the CDF of $Y_1$ can be defined as:
	\begin{align}
		F_{Y_1}(t)=\int_{0}^{1}\int_{-\infty}^{t}c\left(u,F_{\frac{Y_1}{Y_2}}\left(\tfrac{y_1}{F_{Y_2}^{-1}(u)}\right)\right)
		\tfrac{f_{\frac{Y_1}{Y_2}}\left(\tfrac{y_1}{F_{Y_2}^{-1}(u)}\right)}{|F_{Y_2}^{-1}(u)|} dy_1du,
	\end{align}
	where $c(.)$ denotes the density of copula $C$. By taking change of variable  $v=F_{\frac{Y_1}{Y_2}}\left(\tfrac{y_1}{F_{Y_2}^{-1}(u)}\right)\Rightarrow dv=\tfrac{f_{\frac{Y_1}{Y_2}}\left(\tfrac{y_1}{F_{Y_2}^{-1}(u)}\right)}{F_{Y_2}^{-1}(u)}dy_1$, and since $F_{Y_2}^{-1}(u)\ge 0\Leftrightarrow u\ge 0$, and $F_{Y_2}^{-1}(u)\le 0\Leftrightarrow u\le 0$, we have
	\begin{align}\nonumber
		F_{Y_1}(t)=&-\int_{0}^{F_1(0)}\int_{1}^{F_{\frac{Y_1}{Y_2}}\left(\frac{t}{F_{Y_2}^{-1}(u)}\right)}\frac{\partial^2}{\partial u\partial v}C(u,v)dvdu\\
		&+\int_{F_1(0)}^{1}\int_{0}^{F_{\frac{Y_1}{Y_2}}\left(\frac{t}{F_{Y_2}^{-1}(u)}\right)}\frac{\partial^2}{\partial u\partial v}C(u,v)dvdu. 
	\end{align}
	Now, by computing the above integral, the proof is completed. The details of the proof can be obtained in \cite{ly2019determining}.
	\section{Proof of Theorem \ref{thm-cdf-pdf}}\label{app-thm-cdf-pdf}
	By applying the FGM copula to \eqref{def-cdf-d}, and then first derivation with respect to $F_{G_\mathrm{SR}}(G_\mathrm{SR})$, the CDF of $\gamma_\mathrm{D}$ can be rewritten as:
	\begin{align}\nonumber
		&F_{\gamma_\mathrm{D}}(\gamma_\mathrm{D})=\int_{0}^{\infty}f_{G_\mathrm{SR}}(g_\mathrm{SR})F_{G_\mathrm{RD}}\left(\frac{\gamma_\mathrm{D}}{\hat{\gamma}_\mathrm{D}g_\mathrm{SR}}\right)\\
		&\hspace{-1ex}\times\Bigg[1+\theta\left(1-F_{G_\mathrm{RD}}\left(\frac{\gamma_\mathrm{D}}{\hat{\gamma}_\mathrm{D}g_\mathrm{SR}}\right)\right)\left(1-2F_{G_\mathrm{SR}}\left(g_\mathrm{SR}\right)\right)\Bigg]dg_\mathrm{SR}\label{},\\
		&=1-\mathcal{I}_1+\theta\left[-\mathcal{I}_1+2\mathcal{I}_2+\mathcal{I}_3-2\mathcal{I}_4\right],\label{proof-cdf}
	\end{align}
	where
	\begin{align}
		\mathcal{I}_1=\int_{0}^{\infty}f_{G_\mathrm{SR}}(g_\mathrm{SR})\overline{F}_{G_\mathrm{RD}}\left(\frac{\gamma_\mathrm{D}}{\hat{\gamma}_\mathrm{D}g_\mathrm{SR}}\right)dg_\mathrm{SR},
	\end{align}
	\begin{align}
		\mathcal{I}_2=\int_{0}^{\infty}f_{G_\mathrm{SR}}(g_\mathrm{SR})\overline{F}_{G_\mathrm{SR}}(g_\mathrm{SR})\overline{F}_{G_\mathrm{RD}}\left(\frac{\gamma_\mathrm{D}}{\hat{\gamma}_\mathrm{D}g_\mathrm{SR}}\right)dg_\mathrm{SR},
	\end{align}
	\begin{align}
		\mathcal{I}_3=\int_{0}^{\infty}f_{G_\mathrm{SR}}(g_\mathrm{SR})\left(\overline{F}_{G_\mathrm{RD}}\left(\frac{\gamma_\mathrm{D}}{\hat{\gamma}_\mathrm{D}g_\mathrm{SR}}\right)\right)^2dg_\mathrm{SR},
	\end{align}
	\begin{align}
		\mathcal{I}_4=\int_{0}^{\infty}f_{G_\mathrm{SR}}(g_\mathrm{SR})\overline{F}_{G_\mathrm{SR}}(g_\mathrm{SR})\left(\overline{F}_{G_\mathrm{RD}}\left(\frac{\gamma_\mathrm{D}}{\hat{\gamma}_\mathrm{D}g_\mathrm{SR}}\right)\right)^2dg_\mathrm{SR}.
	\end{align}
	Now, by inserting the marginal CDFs and PDFs of $g_\mathrm{SR}$ given in \eqref{pdf-g} and \eqref{cdf-g} to above integrals and exploiting the following integral format, i.e., 
	\begin{align}
		\int_0^{\infty } x^{\beta-1} \mathrm{e}^{- \left(\lambda x+\frac{\eta}{x}\right)} \, dx=2\,\eta^{\frac{\beta}{2}} \lambda^{-\frac{\beta}{2}} K_{-\beta}\left(2 \sqrt{\eta\lambda} \right),
	\end{align}
	the integrals $\mathcal{I}_w$ for $w\in\{1,2,3,4\}$ can be computed as:  
	\begin{align}
		&\mathcal{I}_1
		=\frac{{m}^{m}}{\Gamma(m)}\sum_{k=0}^{m-1}\frac{m^n\gamma_\mathrm{D}^n}{\hat{\gamma}_\mathrm{D}^n n!}\int_{0}^{\infty}g_\mathrm{SR}^{m-n-1}\mathrm{e}^{-mg_\mathrm{SR}-\frac{m\gamma_\mathrm{D}}{\hat{\gamma}_\mathrm{D}g_\mathrm{SR}}}dg_\mathrm{SR},\\
		&=\frac{{m}^{m}}{\Gamma(m)}\sum_{n=0}^{m-1}\frac{2m^n}{n!}   \left(\frac{\gamma_\mathrm{D}}{\hat{\gamma}_\mathrm{D}}\right)^{\frac{m+n}{2}} K_{n-m}\left(2m\sqrt{\frac{\gamma_\mathrm{D}}{\hat{\gamma}_\mathrm{D}}}\right),\label{i1}
	\end{align}
	\begin{align}\nonumber
		&\mathcal{I}_2
		=\frac{m^m}{\Gamma(m)}\sum_{k=0}^{m-1}\sum_{n=0}^{m-1}\frac{m^{k+n}\gamma_\mathrm{D}^n}{\hat{\gamma}_{\mathrm{D}}^nk!n!}\\
		&\quad\quad\times\int_{0}^{\infty}g_\mathrm{SR}^{k-n+m-1}\mathrm{e}^{-2mg_\mathrm{SR}-\frac{m\gamma_\mathrm{D}}{\hat{\gamma}_\mathrm{D}g_\mathrm{SR}}}dg_\mathrm{SR},\\\nonumber
		&=\frac{m^m}{\Gamma(m)}\sum_{k=0}^{m-1}\sum_{n=0}^{m-1}\frac{2^{\frac{-k-m+n+2}{2}}m^{k+n}}{\hat{\gamma}_\mathrm{D}^{\frac{k+n+m}{2}}k!n!}\\
		&\quad\times\gamma_\mathrm{D}^{\frac{k+n+m}{2}} K_{n-k-m}\left(2m\sqrt{\frac{2\gamma_\mathrm{D}}{\hat{\gamma}_\mathrm{D}}}\right),\label{i2}
	\end{align}
	\begin{align}\nonumber
		&\mathcal{I}_3
		=\frac{m^m}{\Gamma(m)}\sum_{n=0}^{m-1}\sum_{l=0}^{m-1}\frac{1}{n!l!}\\
		&\quad\,\,\,\times\int_{0}^{\infty}g_\mathrm{SR}^{m-1}\mathrm{e}^{-mg_\mathrm{SR}-\frac{2m\gamma_\mathrm{D}}{\hat{\gamma}_\mathrm{D}g_\mathrm{SR}}}\left(\frac{m\gamma_\mathrm{D}}{\hat{\gamma}_\mathrm{D}g_\mathrm{SR}}\right)^{n+l}dg_\mathrm{SR}\\\nonumber
		&=\frac{m^m}{\Gamma(m)}\sum_{n=0}^{m-1}\sum_{l=0}^{m-1}\frac{2^{\frac{-l+m-n+2}{2}}m^{l+n}}{\hat{\gamma}_\mathrm{D}^{\frac{l+m+n}{2}}n!l!}\\
		&\quad\times\gamma_\mathrm{D}^{\frac{l+m+n}{2}} K_{l-m+n}\left(2m \sqrt{\frac{2\gamma_\mathrm{D}}{\hat{\gamma}_\mathrm{D}}}\right),\label{i3}
	\end{align}
	\begin{align}\nonumber
		\mathcal{I}_4
		&=\frac{{m}^{m}}{\Gamma(m)}\sum_{k=0}^{m-1}\sum_{n=0}^{m-1}\sum_{l=0}^{m-1}\frac{m^{k+n+l}\gamma_\mathrm{D}^{n+l}}{\hat{\gamma}_\mathrm{D}^{n+l}k!n!l!}\\
		&\quad\times\int_{0}^{\infty}g_\mathrm{SR}^{k-n-l+m-1}e^{-2mg_\mathrm{SR}-\frac{2m\gamma_\mathrm{D}}{\hat{\gamma}_\mathrm{D}g_\mathrm{SR}}}dg_\mathrm{SR}\\\nonumber
		&=\frac{{m}^{m}}{\Gamma(m)}\sum_{k=0}^{m-1}\sum_{n=0}^{m-1}\sum_{l=0}^{m-1}\frac{2 m^{k+n+l}}{\hat{\gamma}_\mathrm{D}^{\frac{k+n+l+m}{2}}k!n!l!}\\
		&\quad\times\gamma_\mathrm{D}^{\frac{k+n+l+m}{2}} K_{n+l-k-m}\left(4 m \sqrt{\frac{\gamma_\mathrm{D}}{\hat{\gamma}_\mathrm{D}}}\right).\label{i4}
	\end{align}
	Finally, by plugging \eqref{i1}-\eqref{i4} into \eqref{proof-cdf} and doing some algebraic simplifications, the proof is completed.

\bibliographystyle{IEEEtran}
\bibliography{sample.bib}

\begin{thebibliography}{10}
\providecommand{\url}[1]{#1}
\csname url@samestyle\endcsname
\providecommand{\newblock}{\relax}
\providecommand{\bibinfo}[2]{#2}
\providecommand{\BIBentrySTDinterwordspacing}{\spaceskip=0pt\relax}
\providecommand{\BIBentryALTinterwordstretchfactor}{4}
\providecommand{\BIBentryALTinterwordspacing}{\spaceskip=\fontdimen2\font plus
\BIBentryALTinterwordstretchfactor\fontdimen3\font minus
  \fontdimen4\font\relax}
\providecommand{\BIBforeignlanguage}[2]{{%
\expandafter\ifx\csname l@#1\endcsname\relax
\typeout{** WARNING: IEEEtran.bst: No hyphenation pattern has been}%
\typeout{** loaded for the language `#1'. Using the pattern for}%
\typeout{** the default language instead.}%
\else
\language=\csname l@#1\endcsname
\fi
#2}}
\providecommand{\BIBdecl}{\relax}
\BIBdecl

\bibitem{david20186g}
K.~David and H.~Berndt, ``{6G vision and requirements: Is there any need for
  beyond 5G?}'' \emph{IEEE Veh. Technol. Mag.}, vol.~13, no.~3, pp. 72--80,
  2018.

\bibitem{huang2012throughput}
C.~Huang, R.~Zhang, and S.~Cui, ``{Throughput maximization for the Gaussian
  relay channel with energy harvesting constraints},'' \emph{IEEE J. Sel. Areas
  Commun.}, vol.~31, no.~8, pp. 1469--1479, 2012.

\bibitem{jiang2017optimal}
R.~Jiang, K.~Xiong, P.~Fan, Y.~Zhang, and Z.~Zhong, ``{Optimal design of SWIPT
  systems with multiple heterogeneous users under non-linear energy harvesting
  model},'' \emph{IEEE Access}, vol.~5, pp. 11\,479--11\,489, 2017.

\bibitem{xiong2017rate}
K.~Xiong, B.~Wang, and K.~R. Liu, ``{Rate-energy region of SWIPT for MIMO
  broadcasting under nonlinear energy harvesting model},'' \emph{IEEE Trans.
  Wireless Commun.}, vol.~16, no.~8, pp. 5147--5161, 2017.

\bibitem{varshney2008transporting}
L.~R. Varshney, ``{Transporting information and energy simultaneously},''
  \emph{2008 IEEE international symposium on information theory}, pp.
  1612--1616, 2008.

\bibitem{lu2014wireless}
X.~Lu, P.~Wang, D.~Niyato, D.~I. Kim, and Z.~Han, ``{Wireless networks with RF
  energy harvesting: A contemporary survey},'' \emph{IEEE Commun. Surv.
  Tutor.}, vol.~17, no.~2, pp. 757--789, 2014.

\bibitem{zhou2013wireless}
X.~Zhou, R.~Zhang, and C.~K. Ho, ``{Wireless information and power transfer:
  Architecture design and rate-energy tradeoff},'' \emph{IEEE Trans. Commun.},
  vol.~61, no.~11, pp. 4754--4767, 2013.

\bibitem{nasir2013relaying}
A.~A. Nasir, X.~Zhou, S.~Durrani, and R.~A. Kennedy, ``{Relaying protocols for
  wireless energy harvesting and information processing},'' \emph{IEEE Trans.
  Wireless Commun.}, vol.~12, no.~7, pp. 3622--3636, 2013.

\bibitem{pan2017outage}
G.~Pan and C.~Tang, ``{Outage performance on threshold AF and DF relaying
  schemes in simultaneous wireless information and power transfer systems},''
  \emph{AEU - Int. J. Electron. Commun.}, vol.~71, pp. 175--180, 2017.

\bibitem{do2017exploiting}
N.~T. Do, D.~B. da~Costa, T.~Q. Duong, V.~N.~Q. Bao, and B.~An, ``{Exploiting
  direct links in multiuser multirelay SWIPT cooperative networks with
  opportunistic scheduling},'' \emph{IEEE Trans. Wireless Commun.}, vol.~16,
  no.~8, pp. 5410--5427, 2017.

\bibitem{di2016simultaneous}
X.~Di, K.~Xiong, P.~Fan, and H.-C. Yang, ``{Simultaneous wireless information
  and power transfer in cooperative relay networks with rateless codes},''
  \emph{IEEE Trans. Veh. Technol.}, vol.~66, no.~4, pp. 2981--2996, 2016.

\bibitem{rabie2017half}
K.~M. Rabie, B.~Adebisi, and M.-S. Alouini, ``{Half-duplex and full-duplex AF
  and DF relaying with energy-harvesting in log-normal fading},'' \emph{IEEE
  Trans. Green Commun.}, vol.~1, no.~4, pp. 468--480, 2017.

\bibitem{li2015outage}
T.~Li, P.~Fan, and K.~B. Letaief, ``{Outage probability of energy harvesting
  relay-aided cooperative networks over Rayleigh fading channel},'' \emph{IEEE
  Trans. Veh. Technol.}, vol.~65, no.~2, pp. 972--978, 2015.

\bibitem{lee2016outage}
H.~Lee, C.~Song, S.-H. Choi, and I.~Lee, ``{Outage probability analysis and
  power splitter designs for SWIPT relaying systems with direct link},''
  \emph{IEEE Commun. Lett.}, vol.~21, no.~3, pp. 648--651, 2016.

\bibitem{lou2017performance}
Y.~Lou, Y.~Zheng, J.~Cheng, and H.~Zhao, ``{Performance of SWIPT-Based
  Differential AF Relaying Over Nakagami-$ m $ Fading Channels With Direct
  Link},'' \emph{IEEE Wireless Commun. Lett.}, vol.~7, no.~1, pp. 106--109,
  2017.

\bibitem{zhong2018outage}
S.~Zhong, H.~Huang, and R.~Li, ``{Outage probability of power splitting SWIPT
  two-way relay networks in Nakagami-m fading},'' \emph{EURASIP J. Wirel.
  Commun. Netw.}, vol. 2018, no.~1, pp. 1--8, 2018.

\bibitem{rabie2018full}
K.~Rabie, B.~Adebisi, G.~Nauryzbayev, O.~S. Badarneh, X.~Li, and M.-S. Alouini,
  ``{Full-duplex energy-harvesting enabled relay networks in generalized fading
  channels},'' \emph{IEEE Wireless Commun. Lett.}, vol.~8, no.~2, pp. 384--387,
  2018.

\bibitem{nauryzbayev2018performance}
G.~Nauryzbayev, K.~M. Rabie, M.~Abdallah, and B.~Adebisi, ``{On the Performance
  Analysis of WPT-Based Dual-Hop AF Relaying Networks in $\alpha-\mu$
  Fading},'' \emph{IEEE Access}, vol.~6, pp. 37\,138--37\,149, 2018.

\bibitem{makarfi2020performance}
A.~U. Makarfi, R.~Kharel, K.~M. Rabie, X.~Li, O.~S. Badarneh, G.~Nauryzbayev,
  S.~Arzykulov, and O.~Kaiwartya, ``{Performance analysis of SWIPT networks
  over composite fading channels},'' in \emph{2020 IEEE Eighth International
  Conference on Communications and Networking (ComNet)}.\hskip 1em plus 0.5em
  minus 0.4em\relax IEEE, 2020, pp. 1--7.

\bibitem{mohjazi2018performance}
L.~Mohjazi, S.~Muhaidat, M.~Dianati, and M.~Al-Qutayri, ``{Performance analysis
  of SWIPT relay networks with noncoherent modulation},'' \emph{IEEE Trans.
  Green Commun.}, vol.~2, no.~4, pp. 1072--1086, 2018.

\bibitem{fan2017secure}
L.~Fan, R.~Zhao, F.-K. Gong, N.~Yang, and G.~K. Karagiannidis, ``{Secure
  multiple amplify-and-forward relaying over correlated fading channels},''
  \emph{IEEE Trans. Commun.}, vol.~65, no.~7, pp. 2811--2820, 2017.

\bibitem{fan2017secrecy}
L.~Fan, X.~Lei, N.~Yang, T.~Q. Duong, and G.~K. Karagiannidis, ``{Secrecy
  cooperative networks with outdated relay selection over correlated fading
  channels},'' \emph{IEEE Trans. Veh. Technol.}, vol.~66, no.~8, pp.
  7599--7603, 2017.

\bibitem{livieratos2014correlated}
S.~Livieratos, A.~Voulkidis, G.~Chatzarakis, and P.~Cottis, ``{Correlated
  phenomena in wireless communications: A copula approach},'' in
  \emph{Applications of Mathematics and Informatics in Science and
  Engineering}.\hskip 1em plus 0.5em minus 0.4em\relax Springer, 2014, pp.
  95--104.

\bibitem{gholizadeh2015capacity}
M.~H. Gholizadeh, H.~Amindavar, and J.~A. Ritcey, ``{On the capacity of MIMO
  correlated Nakagami-m fading channels using copula},'' \emph{EURASIP J.
  Wirel. Commun. Netw.}, vol. 2015, no.~1, pp. 1--11, 2015.

\bibitem{ghadi2020copula}
F.~R. Ghadi and G.~A. Hodtani, ``{Copula function-based analysis of outage
  probability and coverage region for wireless multiple access communications
  with correlated fading channels},'' \emph{IET Communun.}, vol.~14, no.~11,
  pp. 1805--1811, 2020.

\bibitem{ghadi2020copula1}
------, ``{Copula-based analysis of physical layer security performances over
  correlated Rayleigh fading channels},'' \emph{IEEE Trans. Inf. Forensics
  Secur.}, vol.~16, pp. 431--440, 2020.

\bibitem{jorswieck2020copula}
E.~A. Jorswieck and K.-L. Besser, ``{Copula-based bounds for multi-user
  communications--Part I: Average Performance},'' \emph{IEEE Commun. Lett.},
  vol.~25, no.~1, pp. 3--7, 2020.

\bibitem{besser2020copula}
K.-L. Besser and E.~A. Jorswieck, ``{Copula-based bounds for multi-user
  communications--Part II: Outage Performance},'' \emph{IEEE Commun. Lett.},
  vol.~25, no.~1, pp. 8--12, 2020.

\bibitem{ghadi2021role}
F.~R. Ghadi, G.~A. Hodtani, and F.~J. L{\'o}pez-Mart{\'\i}nez, ``{The role of
  correlation in the doubly dirty fading mac with side information at the
  transmitters},'' \emph{IEEE Wireless Commun. Lett.}, vol.~10, no.~9, pp.
  2070--2074, 2021.

\bibitem{besser2020bounds}
K.-L. Besser and E.~A. Jorswieck, ``{Bounds on the secrecy outage probability
  for dependent fading channels},'' \emph{IEEE Trans. Commun.}, vol.~69, no.~1,
  pp. 443--456, 2020.

\bibitem{ghadi2022capacity}
F.~R. Ghadi, F.~J. Martin-Vega, and F.~J. Lopez-Martinez, ``{Capacity of
  backscatter communication under arbitrary fading dependence},'' \emph{IEEE
  Trans. Veh. Technol.}, 2022.

\bibitem{ghadi2022performance}
F.~R. Ghadi and W.-P. Zhu, ``{Performance Analysis over Correlated/Independent
  Fisher-Snedecor F Fading Multiple Access Channels},'' \emph{IEEE Trans. Veh.
  Technol.}, 2022.

\bibitem{Tarrias2021}
A.~Tarrías-Muñoz, J.~L. Matez-Bandera, P.~Ramírez-Espinosa, and F.~J.
  López-Martínez, ``Effect of correlation between information and energy
  links in secure wireless powered communications,'' \emph{IEEE Trans. Inf.
  Forensics Secur.}, vol.~16, pp. 3780--3789, 2021.

\bibitem{santi2022}
S.~Fernández, F.~J. López-Martínez, F.~H. Gregorio, and J.~E. Cousseau,
  ``Secure full-duplex wireless power transfer systems with energy-information
  correlation,'' \emph{IEEE Access}, vol.~10, pp. 16\,952--16\,968, 2022.

\bibitem{Griffin2007}
J.~D. Griffin and G.~D. Durgin, ``Link envelope correlation in the backscatter
  channel,'' \emph{IEEE Commun. Lett.}, vol.~11, no.~9, pp. 735--737, 2007.

\bibitem{Zhang2019}
Y.~Zhang, F.~Gao, L.~Fan, X.~Lei, and G.~K. Karagiannidis, ``Backscatter
  communications over correlated nakagami- $m$ fading channels,'' \emph{IEEE
  Trans. Commun.}, vol.~67, no.~2, pp. 1693--1704, 2019.

\bibitem{Deng2018}
\BIBentryALTinterwordspacing
D.~Deng, M.~Yu, J.~Xia, Z.~Na, J.~Zhao, and Q.~Yang, ``Wireless powered
  cooperative communications with direct links over correlated channels,''
  \emph{Phys. Commun.}, vol.~28, pp. 147--153, 2018. [Online]. Available:
  \url{https://www.sciencedirect.com/science/article/pii/S187449071830051X}
\BIBentrySTDinterwordspacing

\bibitem{nelsen2007introduction}
R.~B. Nelsen, \emph{{An introduction to copulas}}.\hskip 1em plus 0.5em minus
  0.4em\relax Springer Science \& Business Media, 2007.

\bibitem{sriboonchitta2018fgm}
S.~Sriboonchitta and V.~Kreinovich, ``{Why are FGM copulas successful? A simple
  explanation},'' \emph{Advances in Fuzzy Systems}, vol. 2018, 2018.

\bibitem{laneman2004cooperative}
J.~N. Laneman, D.~N. Tse, and G.~W. Wornell, ``{Cooperative diversity in
  wireless networks: Efficient protocols and outage behavior},'' \emph{IEEE
  Trans Inf. Theory}, vol.~50, no.~12, pp. 3062--3080, 2004.

\bibitem{adamchik1990algorithm}
V.~Adamchik and O.~Marichev, ``{The algorithm for calculating integrals of
  hypergeometric type functions and its realization in REDUCE system},'' in
  \emph{Proceedings of the international symposium on Symbolic and algebraic
  computation}, 1990, pp. 212--224.

\bibitem{prudnikov1990more}
A.~Prudnikov, Y.~A. Brychkov, and O.~Marichev, ``{More Special Functions
  (Integrals and Series vol 3)},'' 1990.

\bibitem{integral1}
\url{https://www.wolframcloud.com/obj/frostamigh/Published/Integral.Asymptotic.nb}.

\bibitem{Wang2003}
Z.~Wang and G.~Giannakis, ``A simple and general parameterization quantifying
  performance in fading channels,'' \emph{IEEE Trans. Commun.}, vol.~51, no.~8,
  pp. 1389--1398, 2003.

\bibitem{Coblenz2021}
\BIBentryALTinterwordspacing
M.~Coblenz, ``{MATVines: A vine copula package for MATLAB},'' \emph{SoftwareX},
  vol.~14, p. 100700, 2021. [Online]. Available:
  \url{https://www.sciencedirect.com/science/article/pii/S2352711021000455}
\BIBentrySTDinterwordspacing

\bibitem{Joe2014}
H.~Joe, \emph{Dependence modeling with copulas}.\hskip 1em plus 0.5em minus
  0.4em\relax CRC press, 2014.

\bibitem{ly2019determining}
S.~Ly, K.-H. Pho, S.~Ly, and W.-K. Wong, ``{Determining distribution for the
  product of random variables by using copulas},'' \emph{Risks}, vol.~7, no.~1,
  p.~23, 2019.

\end{thebibliography}
\end{document}